\newtheorem{theorem}{Theorem}[section]
\newtheorem{proof}{Proof}[section]
\newtheorem{definition}{Definition}[section]
\newcommand{\pkg}[1]{{\normalfont\fontseries{b}\selectfont #1}}
\let\proglang=\textsf
\let\code=\texttt
\newcommand{\blind}{0}
\begin{document}

\def\spacingset#1{\renewcommand{\baselinestretch}%
{#1}\small\normalsize} \spacingset{1}

%%%%%%%%%%%%%%%%%%%%%%%%%%%%%%%%%%%%%%%%%%%%%%%%%%%%%%%%%%%%%%%%%%%%%%%%%%%%%%

\if0\blind
{
  \title{\bf Unbiased mixed variables distance}
  \author{Michel van de Velden\\
    Econometric Institute, Erasmus University Rotterdam\\
    and \\
    Alfonso Iodice D'Enza \\
    Department of Political Sciences, University of Naples Federico II\\
    and \\
    Angelos Markos \\
    Department of Primary Education, Democritus University of Thrace\\
    and \\
    Carlo Cavicchia \\
    Econometric Institute, Erasmus University Rotterdam}
  \maketitle
} \fi

\if1\blind
{
  \bigskip
  \bigskip
  \bigskip
  \begin{center}
    {\LARGE\bf Title}
\end{center}
  \medskip
} \fi

\bigskip
\begin{abstract}
Defining a distance in a mixed setting requires the quantification of observed differences of variables of different types and of variables that are measured on different scales. There exist several proposals for mixed variable distances, however, such distances tend to be \say{biased} towards specific variable types and measurement units. That is, the variable types and scales influence the contribution of individual variables to the overall distance. In this paper, we define \say{unbiased} mixed variable distances for which the contributions of individual variables to the overall distance are not influenced by measurement types or scales. We define the relevant concepts to quantify such biases and we provide a general formulation that can be used to construct unbiased mixed variable distances.
\end{abstract}

\noindent%
{\it Keywords:}  Data visualization, distance-based methods, cluster analysis
\vfill

\newpage
\spacingset{1.75} 
\section{Introduction}
In many fields of study, it is necessary to measure the distance or similarity between observations or data points. These measurements are crucial for various tasks, including data visualization, clustering and classification. A significant challenge arises when a dataset consists of mixed variables, that is, when the observed variables differ in type or scale \parencite{BMMFRBC:2023}. Mixed variable datasets can consist of both numerical variables (continuous or discrete) and categorical variables (binary, ordinal, or nominal). These datasets are often encountered in machine learning and scientific computing tasks \parencite{Luo:2024, Hermes:2024}.

In a purely numerical context distance can be based on the sizes of the observed differences. For categorical variables, the situation is more complex as one only observes whether categories are equivalent or not. However, it may be appropriate to quantify differences between categories of the same variable differently. For example, one could argue that the dissimilarity between the colors purple and blue is smaller than the dissimilarity between yellow and blue. Similarly, the number of categories could play a role in quantifying differences. If there are many categories, there are more ways to differ. Many proposals exist to quantify distance for categorical variables; see, e.g. \textcite{BCK:2008,SR:2019,vandeveldenetal2024}. The complexity arising from quantifying differences and the impact that these choices have in a clustering context when working with categorical data is thoroughly demonstrated by \textcite{Amiri:2018}.

For mixed variables, the situation is even more complex, as this requires a method of calculating distance that can handle the diversity in variable types effectively, i.e., the aggregation over the variables should not be influenced by types or scales. One of the most recognized and widely used methods for calculating mixed variable distances is based on Gower’s general coefficient of similarity \parencite{Gower:1971}. This measure, which we refer to as Gower's distance, provides an elegant and simple way to measure the distance between observations based on measurement on numerical and categorical variables. Although alternative proposals that allow distance calculations in mixed variable contexts exist \parencite[see, e.g.,][]{AD:2007, Huang1998, van2019distance}, Gower's proposal -- implemented in the \pkg{gower} \proglang{R} package \parencite{gower_pkg} and in the \code{daisy} function available through the \pkg{cluster} \proglang{R} package \parencite{cluster_pkg} -- remains a popular choice.

It has been observed that biases can arise when computing distances in a multivariate context. For example, in numerical data, measurement scales can influence distances. Similarly, for categorical variables, the number of categories can affect the calculation of distances. Additionally, in settings with mixed variable types, it is important to control the balance between these types. For example, the Gower's distance is known to be biased toward the contribution of categorical variables \parencite{HennigLiao2013,foss2016semiparametric}. One way to overcome potential biases is to weigh the distances resulting from the numerical and categorical variables differently \parencite{Huang1998,de2021towards}. Alternatively, different ways to scale the numerical and categorical variables may be introduced \parencite{HennigLiao2013}. 

In this paper, we consider the problem of biased mixed variables distance. In particular, we propose a general formulation for multivariate mixed variable distances that can be used to construct unbiased mixed variable distances. To achieve this, we consider two properties: 1) \textit{multivariate additivity} and 2) \textit{commensurability}. Multivariate additivity means that the multivariate distance is the sum of univariate distances. The commensurability requirement ensures that the univariate contributions to the overall distance are commensurable (i.e., corresponding in size, extent, amount or degree). We show that existing mixed variable distances lead to biased distances, and we derive the magnitude of such biases. We evaluated several specific mixed variable distances and show that our new unbiased mixed variable distances provide a more objective starting point in an exploratory distance-based data analysis. 

This paper is organized as follows. After introducing the two desirable properties for multivariate mixed variable distances in Section \ref{properties}, we propose a general multivariate mixed variable distance formulation that can be used to construct mixed variable distances that satisfy these properties. In Sections \ref{sect:numdist} and \ref{sect:catdist}, we show how several common distance and scaling options -- for numerical and categorical variables -- violate the properties of Section \ref{properties} and hence may lead to biased mixed variable distances. In Section \ref{sect:unbiased} we show, using our general distance formulation, how unbiased mixed variable distances can be constructed. Using a simulation study and an empirical application, we illustrate our findings in Sections \ref{sect:simulation} and \ref{sect:illustration}, and end the article with our conclusions. 

\section{Properties of a multivariate mixed variable distance} \label{properties}
A multivariate mixed variable distance requires the quantification and aggregation of distances for different variables. Aggregation concerns \say{joining} or the accumulation of variable-specific distances. For this aggregation, we only consider distances that are \say{additive}. We define multivariate additivity as the property that in a $p-$dimensional setting, the distance can be expressed as the sum of $p$ distances. That is,
\begin{definition}[Multivariate Additivity] \label{DefMA}
Let $\mathbf{x}_i=\left(x_{i1}, \dots, x_{ip}\right)$ denote a $p-$dimensional vector. A distance function $d\left(\mathbf{x}_i,\mathbf{x}_l\right)$ between observations $i$ and $l$ is multivariate additive if
  \begin{equation*}
      d\left(\mathbf{x}_i,\mathbf{x}_l\right)=\sum_{j=1}^{p} d_j\left(\mathbf{x}_i,\mathbf{x}_l\right),
  \end{equation*}  
  where $d_j\left(\mathbf{x}_i,\mathbf{x}_l\right)$ denotes the $j-$th variable specific distance.
\end{definition}
Note that the variable-specific distances $d_j$ are functions of the $p-$dimensional vectors. These functions can be used to take into account associations between the $p$ variables. 

The Manhattan distance for $d\left(\mathbf{x}_i,\mathbf{x}_l\right)$ satisfies the additivity property, whereas the Euclidean distance, which takes the square root of the sum of squared differences, does not. Euclidean distance implicitly assumes that the variables jointly span a space. It is then defined by considering the length of a straight line connecting the points in this high-dimensional space. In a multivariate context, the contribution per variable for the Euclidean distance decreases as more variables are included. 

By defining the overall multivariate distance $d\left(\mathbf{x}_i,\mathbf{x}_l\right)$ as the sum of variable-specific distances, each variable-specific distance $d_j\left(\mathbf{x}_i,\mathbf{x}_l\right)$ is treated equivalently. This requires these distances to be on similar or equivalent scales. To achieve this we introduce an additional property: commensurability. In particular, we define distances as commensurable if the mean variable-specific distance between observations is constant across variables. That is,

\begin{definition}[Commensurability]  \label{DefCom}
Let $\bm{X}_i =\left(X_{i1}, \dots, X_{ip}\right)$ denote a $p-$dimensional random variable corresponding to observation $i$. Furthermore, let $d_{j}$ denote the distance function corresponding to the $j-$th variable. We have commensurability if,
for all $j$, and $i \neq l$, 
\begin{equation*}
    E[d_{j}(\bm{X}_i, \bm{X}_l)] = c,
\end{equation*}
where $c$ is some constant of which the actual value is of no importance. 
\end{definition}

In Definitions \ref{DefMA} and \ref{DefCom},  we allow for variable-type-specific distance measures. In a mixed variable type setting, such flexibility is essential as the numerical and categorical variables require different distance functions. 

\section{A general multivariate mixed variable distance} \label{proposal}

To construct a mixed distance that satisfies multivariate additivity and commensurability, as defined in Definitions \ref{DefMA} and \ref{DefCom}, we introduce a general multivariate mixed variable distance formulation.

Let $\mathbf{x}_i=\left(x_{i1}, \dots, x_{ip}\right)$ denote a $p-$dimensional vector. We define the distance between observations $i$ and $l$ as
\begin{equation}\label{genmixeddist}
    d\left(\mathbf{x}_i,\mathbf{x}_l\right)=\sum_{j=1}^{p} d_j\left(\mathbf{x}_i,\mathbf{x}_l\right)=\sum_{j=1}^{p} w_j\delta_{j}\left(\mathbf{x}_i,\mathbf{x}_l\right),
\end{equation}
where $\delta_j$ is a function quantifying the dissimilarity between observations $i$ and $l$ based on the $p-$dimensional variables, and $w_j$ is a weight corresponding to each of these functions. 

To distinguish numerical from categorical variables, we decompose the overall distance with respect to the variable types. For a numerical variable $j_n$, the distance is given by 
\begin{equation}\label{NumDist}
    d_{j_n}\left(\mathbf{x}_i,\mathbf{x}_l\right) = w_{j_n} \delta^n_{j_n}\left(\mathbf{x}^n_i,\mathbf{x}^n_l\right),
\end{equation}
where $\mathbf{x}^n_{i} $ denotes the vector of $p_n$ numerical elements of $\mathbf{x}_i$, $\delta^n_{j_n}$ is a function quantifying the dissimilarity between observations on the $j_n-$th numerical variable, and $w_{j_n}$ is a scale value, or weight, for the $j_n-$th variable. Similarly, for a categorical variable  $j_c$, the distance is
\begin{equation}\label{CatDist}
    d_{j_c}\left(\mathbf{x}_i,\mathbf{x}_l\right) = w_{j_c} \delta^c_{j_c}\left(\mathbf{x}^c_i,\mathbf{x}^c_l\right),
\end{equation}
where $\mathbf{x}^c_{i} $ 
denotes the vector of $p_c$ categorical elements of $\mathbf{x}_i$, $\delta^c_{j_c}$ 
corresponds to the category dissimilarity between the categories chosen by subjects $i$ and $l$ for categorical variable $j_c$, and $w_{j_c}$ is a scale value, or weight, for the $j_c-$th variable. Hence, the category dissimilarities quantify the differences between categories. This decomposition avoids using the full mixed vector of observations as input for distance calculations. As a result, the distances for the numerical and categorical variables depend only on the values of the variables of the same type. Although a more general formulation, where interaction between the variables of different types is theoretically appealing, defining and quantifying such interactions is challenging and we do not pursue this here. 

Combining Equations (\ref{genmixeddist}),(\ref{NumDist}) and (\ref{CatDist}), leads to the following formulation for mixed distance between observations $i$ and $l$:
\begin{equation}\label{genmixeddist_formula}
    d\left(\mathbf{x}_i,\mathbf{x}_l\right)=\sum_{j=1}^{p} w_j\delta_{j}\left(\mathbf{x}_i,\mathbf{x}_l\right)=\sum_{j_n=1}^{p_n} w_{j_n} \delta^n_{j_n}\left(\mathbf{x}^n_i,\mathbf{x}^n_l\right)+ \sum_{j_c=1}^{p_c} w_{j_c}\delta^c_{j_c}\left(\mathbf{x}^c_i,\mathbf{x}^c_l\right),
\end{equation}
where $p_n$ and $p_c$ are the number of numerical and categorical variables, respectively. 

This general formulation allows for a very flexible implementation of mixed distances. Gower's distance, for example, is a special case of this general distance. By definition, this general formulation satisfies the multivariate additivity property. Furthermore, although commensurability is not guaranteed, it can be achieved by applying appropriate variable  specific weights. In particular, if in (\ref{genmixeddist}) -- and analogously in (\ref{genmixeddist_formula}) -- we set
\begin{equation}\label{commensurableweights}
    w_j = 1/E[d_{j}(\bm{X}_i, \bm{X}_l)],
\end{equation}
for all $j$ and $d_j$, we obtain a distance that satisfies the multivariate additivity as well as the commensurability properties. Hence, using these weights we obtain an unbiased mixed distance where measurement types or units do not trivially impact the overall distance. 

The expected values in (\ref{commensurableweights}) depend on the functions $d_j$ and the distributions underlying the random variables. One may either consider appropriate assumptions concerning the underlying distributions, or follow a completely distribution-free approach to estimate the expected value based on the observed data. In Sections \ref{sect:numdist} and \ref{sect:catdist}, we consider some properties of common distance and scaling options for numerical and categorical variables. In particular, we show how, depending on the underlying distributions, several popular choices for $\delta_j$ and $w_j$ can lead to biased distances. 
 
\section{Numerical variables distances and scaling}\label{sect:numdist}
There are many distance definitions for numerical variables. Typically, these distances are based directly on the observed numerical differences. For example, consider the Minkowski distance, defined as:
\begin{equation*}
   d_{mink}\left(\mathbf{x}^n_{i},\mathbf{x}^n_{l}\right) = \left(\sum_{j_n=1}^{p_n} \lvert x^n_{ij_n}-x^n_{lj_n} \rvert^{1/m}\right)^{m},
\end{equation*} 
where $m \geq 1$. It is easily verified that this distance includes the Euclidean ($m=2$) and Manhattan distance ($m=1$) as special case. However, the multivariate additivity property is only satisfied for Manhattan distance. In this paper, we therefore only consider distances that apply Manhattan distances to the -- possibly transformed -- numerical data. That is, dropping the super- and sub-scripted $n$'s for ease of notation, we write 
\begin{equation*}
    d_{j}\left(\mathbf{x}_{i}, \mathbf{x}_{l}\right)=w_{j} \delta_{j}\left(\mathbf{x}_{i}, \mathbf{x}_{l}\right)
    = w_j \lvert f_j\left(\mathbf{x}_{i}\right) -f_j\left(\mathbf{x}_{l}\right) \rvert,
\end{equation*}
where $f_{j}(\cdot)$ is a function that transforms the original values (e.g., some type of scaling). In our multivariate additive setting, inserting this distance for all $p_n$ numerical variables gives, as total distance due to the numerical variables
\begin{equation}\label{gendistnumerical}
    d_{num}\left(\mathbf{x}_{i},\mathbf{x}_{l}\right) = \sum_{j=1}^{p_n} w_j \lvert f_j(\mathbf{x}_i)-f_j(\mathbf{x}_l) \rvert.
\end{equation}

Differences between observations for different variables are aggregated through the weights $w_j$ and functions $f_j(\cdot)$ that can also be used to account for differences in measurement scales and distributions. In multivariate contexts, common solutions for dealing with different measurement scales is to rescale the data using either the standard deviation, the range, or the robust (or interquartile) range. We refer to these options as independent scaling, as they do not involve the values of other variables. Alternatively, one could utilize the association between numerical variables to define distances. In the next section, we review common variants for scaling numerical variables and propose a new association-based scaling method that can be used in a multivariate additive context.

\subsection{Scaling of numerical variables}
\emph{Standard deviation scaling}. We define standard deviation scaling as the procedure that transforms, for each variable separately, the measurements to standard deviations from the mean. That is, 
\begin{equation*}
    f(\mathbf{x}_i)=f(x_{ij})=\frac{x_{ij}-\bar{x}_j}{s_j},
\end{equation*}
where $\bar{x}_j$ and $s_j$ denote, respectively, the sample mean and standard deviation for variable $j$.  This type of scaling converts all observations to so-called $z$-scores, and is a common procedure in many statistical analyses. Note that, regardless of the underlying distribution, by scaling in this way, we have: $E[X_i -X_l]^2=2$. Moreover, when the data are from a normal distribution, this conversion to $z$-scores converges to standard normal data.

\noindent \emph{Range scaling}. For range scaling, which is used in Gower's distance, we subtract the minimum and divide by the range of a variable as follows 
\begin{equation*}
      f(\mathbf{x}_i)=f(x_{ij})=\frac{x_{ij}-x^{min}_j}{x^{max}_j-x^{min}_j},
\end{equation*}
where $x^{min}_j$ and $x^{max}_j$ denote, respectively, the minimum and maximum for variable $j$. Consequently, all dissimilarities are between $0$ and $1$ and there is at least one dissimilarity equal to $1$. This scaling retains the original shape of the distribution, however, it is very sensitive to outliers and may lead to a loss of information. In particular, if the data are spread out over a large interval but highly condensed, the majority of distances becomes small, making differentiation difficult. 

\noindent \emph{Robust range scaling}. For robust range scaling, we subtract the median and divide by the interquartile range. That is,
\begin{equation*}
     f(\mathbf{x}_i)=f(x_{ij})=\frac{x_{ij}-x^{0.50}_j}{x^{0.75}_j-x^{0.25}_j},
\end{equation*}
where $x^{p}_j$ denotes the $p-$th percentile for variable $j$. This method overcomes some of the problems of range scaling since it is less susceptible to the presence of outliers.  

\noindent \emph{Principal component scaling}. The multivariate additivity property implicitly assumes independence of the variables. However, in some cases one could argue that relationships between the numerical variables should be taken into account. Our formulation accommodates this by allowing the functions $\delta^n_j$ to take the complete (numerical) observation vector as their argument. Here, we propose a method to incorporate variable association in a multivariate additive setting.

When two variables are (linearly) related, it could be argued that distances orthogonal to the direction of the relationship should receive larger weights than distances along the direction of the relationship. To achieve this, we first rotate the original data to their principal coordinates, and then standardize them using standard deviations.

Let $\mathbf{X}$ denote an $n \times p$ centered data matrix, of rank $p$, with $\mathbf{S}$ the corresponding sample covariance matrix. Consider the eigen-decomposition
\begin{equation*}
    \mathbf{S} = \mathbf{V} \bm{\Lambda}\mathbf{V}^{\top},
\end{equation*} where $\mathbf{V}$ is orthonormal and $\bm{\Lambda}$ is a $p \times p$ diagonal matrix of eigenvalues. Then 
\begin{equation*}
    \mathbf{X}^{*}=\mathbf{XV}\bm{\Lambda}^{-1/2} 
\end{equation*}
gives the data in standardized principal coordinates \parencite{Gower:1966}. The columns of $\mathbf{V}\bm{\Lambda}^{-1/2}$ define $p-$dimensional functions $f_j(\mathbf{x}_i)$ that can be implemented in our distance formulation (\ref{gendistnumerical}). Furthermore, if the eigenvalues are organized in non-increasing order, the first rotated variable, i.e., the first column of $\mathbf{X}^{*}$, corresponds to the linear combination of the original variables that captures the most variation. The second column accounts for the most of the remaining variance, and so on. Note that we do not reduce the number of variables, as is typically the case when applying principal component analysis, but only rotate the data. By considering the principal coordinate orientation, the dimensions are independent.

Principal component scaling is closely related to the Mahalanobis distance. However, with Mahalanobis distance, the standardized principal coordinates are rotated back to the original orientation. That is, 
\begin{equation*}
    \mathbf{X}^{*m}=\mathbf{XS}^{-1/2}=\mathbf{XV}\bm{\Lambda}^{-1/2}\mathbf{V}^{\top} = \mathbf{X}^{*}\mathbf{V}^{\top}. 
\end{equation*}
When Euclidean distances are used, this rotation does not affect the distances. Hence, Euclidean distances based on $\mathbf{X}^{*m}$ are equivalent to those based on $\mathbf{X}^{*}$. However, our multivariate additive distance is sensitive to rotation. We therefore retain the principal component orientation, $\mathbf{X}^{*}$. Consequently, in contrast to the other types of scaling considered, principal component scaling changes the orientation of the original variables, and our multivariate distance is additive with respect to the rotated rather than the original variables. 

\subsection{Commensurability of numerical variable distances} \label{sect:comnumdist}
The scaling options described above can be used to transform variables to similar scales. Nevertheless, this type of scaling does not necessarily lead to commensurability as defined in Definition \ref{DefCom} since the underlying distributions also impact the expected values. The commensurability property for the numerical variables can be expressed as follows 
\begin{equation}\label{EV=1}
    E[d_{j}\left(\bm{X}^n_i,\bm{X}^n_l\right)]= w_j E[\lvert f_j\left(\bm{X}^n_i\right)-f_j\left(\bm{X}^n_l\right) \rvert]=c,
\end{equation}
for all $i \neq l$ and $j \in \{1 \dots p_{n} \}$, and where $c$ is some constant that depends on the chosen scaling as well as on the distribution of $\bm{X}^{n}_{i}$. To illustrate the effect of different distributions and scaling options, we present simulation results for mean distances for various combinations of scaling and distributions in Table \ref{table:scaling_impact}. Moreover, in Appendix \ref{appendix:exactresultsU} and \ref{appendix:exactresultsN} we provide exact results concerning the mean values for Uniform distribution with range scaling, and for Normal distribution with standard deviation standardization. It can be verified that these exact results coincide with the results for $n=500$ in the Table \ref{table:scaling_impact}. 
\begin{table}[!ht]
\caption{Impact of distributions and scaling on mean distance $E[d_{j}(\bm{X}_i, \bm{X}_l)]$. Skewed refers to a $\chi^2_{1/2}$ distribution. For bimodal we considered $n/2$ draws from $\chi^2_{1/2}$  (censored at $10$), and $n/2$ draws from $10-\chi^2_{1/2}$ (censored at $0$)}.
\label{table:scaling_impact}
\centering
\begin{tabular}{|c|c|c|c|c|}
\hline
Distribution & Sample size & SD scaling & Range scaling & Rob. Range scaling\\ \hline
Normal & 50    & 1.16     & 0.25  & 0.76         \\ 
%Normal & 100       & 1.13     & 0.23  & 0.85         \\ 
Normal & 500    & 1.13     & 0.19  & 0.84         \\ \hline
Uniform & 50   & 1.16     & 0.35  & 0.69         \\ 
%Uniform  & 100      & 1.16     & 0.34  & 0.68         \\ 
Uniform & 500      & 1.15     & 0.33  & 0.67         \\ \hline
Skewed & 50   & 0.82     & 0.17  & 1.57     \\ 
%Skewed & 100   & 0.80     & 0.14  & 1.53     \\ 
Skewed & 500   &  0.77    & 0.09  & 1.49     \\ 
\hline
Bimodal & 50   & 1.07     & 0.50  & 0.51     \\ 
%Bimodal & 100   & 1.06     & 0.49  & 0.50    \\ 
Bimodal & 500   & 1.06     & 0.49  & 0.50     \\ 
\hline
\end{tabular}
\end{table}  

Table \ref{table:scaling_impact} illustrates the impact of the underlying distribution and the chosen scaling variant on the mean distances. In a multivariate setting, if all variables are scaled in the same way, and they are from the same distribution (possibly with different parameters; e.g., all Uniform or all Normal), the resulting values are on commensurable scales. However, when we have variables from different distributions, this is no longer the case. In particular, we see that for both range and robust range normalization the mean values vary considerably depending on the distribution. Standardization on the other hand, appears to be less sensitive to the underlying distributions. Still, even when standardization is used, distance calculations based on a combination of skewed and uniform or normal distributed variables clearly biases the results towards the uniform or normal variables. 

\section{Categorical variable distances and scaling}\label{sect:catdist}
Several distance definitions for categorical variable distance have been presented in the literature. Crucial for such distances is the quantification of differences between categories. Specifically, for two observations on one categorical variable, they either fall into the same category or different categories. In the case of differing categories, one straightforward approach is to assign a distance of $1$ (with no difference being assigned a value of zero). However, there may be reasons to differentiate distances depending on the different categories. For instance, the color green might be considered closer to blue than to red, necessitating a more nuanced distance measure. This challenge in quantifying distances for categorical data has led to a multitude of different proposals. For an overview of several such proposals, refer to, for example, \textcite{BCK:2008,SR:2019}.

In \textcite{vandeveldenetal2024}, a framework that allows a general and flexible implementation of categorical variable distances was introduced. Distances obtained using the framework are by construction multivariate additive. Here we use this framework to appraise properties of a selection of the categorical distances reviewed in \textcite{BCK:2008, vandeveldenetal2024}. Additionally, we consider some new distance variants that are based on the indicator matrix representation of categorical variables. 

\subsection{Category dissimilarity}\label{sect.catdissim}
Crucial for the definition of categorical variable distances is a so-called category dissimilarity matrix $\bm{\Delta}_j$ that quantifies, for each variable, the differences between categories. Simple matrix multiplication of an indicator matrix $\mathbf{Z}_j$ (i.e., a binary matrix where each column represents a category, and observed categories are indicated by ones) and $\bm{\Delta}_j$ results in the categorical variable distance matrix. That is, for variable $j$, distances between observations can be calculated as 
\begin{equation} \label{ZDeltaZ'}
    \mathbf{D}_{j} = \mathbf{Z}_{j} \bm{\Delta}_j \mathbf{Z}^{\top}_{j} 
\end{equation}

In a multivariate setting, the distance simply becomes the sum of the variable specific distances over the variables. Hence, the distances satisfy the multivariate additivity property. From (\ref{ZDeltaZ'}) it follows that different $\bm{\Delta}_j$'s correspond to different definitions of categorical variable distances. 

In \textcite{vandeveldenetal2024}, a distinction is made between \say{independent} and \say{association-based} category dissimilarities. For independent category dissimilarities, one variable is treated at the time. The distribution over the categories for that variable can be taken into account to construct the dissimilarities but any possible association between the different variables is ignored. However, to better differentiate between categories, associations between categorical variables may be employed as well. The resulting distances can be referred to as association-based. 

\subsubsection{Independent category dissimilarities} \label{sect.independencategory}
Table \ref{table:deltas} provides an overview of category dissimilarity matrices $\bm{\Delta}_j$. The independent dissimilarity matrices in the top panel are a subset of the distances reviewed in \textcite{BCK:2008} whereas those in the bottom panel, which we label as \say{indicator} variants, concern new proposals based on the application of numerical variable distances to possibly scaled, indicator matrices. 

\begin{table}[!ht]
\caption{Distances and corresponding category dissimilarity matrices or category dissimilarites $\delta_{ab}$, where $\odot$ indicates the Hadamard (i.e., elementwise) matrix product and $\log(\cdot)$ takes the logarithm of the parenthesized object and collects them in an object of the same size}
\label{table:deltas}
\centering
\begin{tabular}{|l|l|}
\hline
Distance & Category disimilarity matrix $\bm{\Delta}_j$ (or \\ & its typical element $\delta_{ab}$, for $a \neq b$)  \\ \hline
Matching  & $\bm{\Delta}_{m_j} = \mathbf{1} \mathbf{1}^{\top} - \mathbf{I}$  \\ 
Eskin & $\bm{\Delta}_{e_{j}} = 2/q_j^2\bm{\Delta}_{m_j}$ \\ 
Occurence frequency (OF) & $\bm{\Delta}_{OF_{j}} = \log(\mathbf{p}_j)\log(\mathbf{p}_j)^{\top} \odot \bm{\Delta}_{m_j}$  \\ 
Inverse OF & $\bm{\Delta}_{IOF_{j}} = \log(n\mathbf{p}_j)\log(n\mathbf{p}_j)^{\top} \odot \bm{\Delta}_{m_j}$ \\
\hline
Indicator: No scaling & $\bm{\Delta}_{d_{j}}=2\bm{\Delta}_{m_j}$  \\
Indicator: Hennig-Liao scaling & $\bm{\Delta}_{HL_{j}} = 2\eta_{j}\bm{\Delta}_{m_j}$ \\
Indicator: Standard deviation scaling & $\delta^{s}_{ab_{j}}=\sqrt{\frac{1}{q_j}} \left(s^{-1/2}_a + s^{-1/2}_b \right)$ \\
Indicator: Cat. dissimilarity scaling & $\bm{\Delta}_{cds_{j}}=\frac{1}{q_j}\mathbf{S}^{-1/2}_{j_d}\bm{\Delta}_{m_j} \mathbf{S}^{-1/2}_{j_d} $\\
%Indicator: MCA scaling & $\bm{\Delta}_{mca_{j}}=\frac{1}{q_j} \mathbf{D}^{-1/2}_{p_j} \bm{\Delta}_{m_j}\mathbf{D}^{-1/2}_{p_j} $\\
\hline
\end{tabular}
\end{table}

The idea of the indicator-based category dissimilarities is to treat the binary coded data as \say{numerical}. In practice, this type of coding, in combination with Euclidean distance, appears to be not uncommon, \parencite[see, e.g.,][]{BG:2020, HennigLiao2013}. As Euclidean distance is not multivariate additive, we replace it by the Manhattan distance. Distances resulting from applying Manhattan distances to differently scaled indicator matrices can then be implemented using (\ref{ZDeltaZ'}) by appropriately defining the category dissimilarity matrices. Below we briefly show how different choices for scaling lead to the different indicator-based category dissimilarity matrices presented in Table \ref{table:deltas}.\\ 

\noindent \emph{Indicator: No scaling}
Applying Manhattan distance to a one-variable indicator matrix results in distances of either zero (same category was chosen) or $2$, different categories are chosen. Hence, in a multivariate setting one can implement this distance by defining, for each variable $j$ as category dissimilarity matrix $\bm{\Delta}_j=2\bm{\Delta}_{m_j}$, where $\bm{\Delta}_{m_j}$ is the matching category dissimilarity matrix of appropriate order.\\ 

\noindent \emph{Indicator: Hennig and Liao scaling}
\textcite{HennigLiao2013} propose to tackle the differences caused by different variable types by standardizing variances across variables. 
For a categorical variable with $q_j$ categories, coded using a $q_j-$dimensional indicator matrix, it is proposed to scale the indicator matrix in such a way that
\begin{equation}\label{HLscaling}
    \sum_{a \in \{1 \dots q_j\}} {E[Z^{*}_{ia} - Z^{*}_{la}]^2}  = 2\phi,
\end{equation} 
where $Z^{*}_{ia}$ denotes the scaled indicator variable for the $i-$th observation on the $a-$th category, and $\phi$ needs to be supplied by the researcher. \textcite{HennigLiao2013} suggest to use $\phi=1/2$ for categorical variables and $\phi=(q_j-1)/q_j$ for ordinal ones. Consequently, for the ordinal case, the expected value in (\ref{HLscaling}) converges to $2$, similar to the situation for numerical variables. 

To satisfy (\ref{HLscaling}), a scaling factor, say $\eta_j$, can be applied to each indicator matrix. That is, for the $j-$th categorical variable, $\mathbf{Z}^{*}_j = \eta_j\mathbf{Z}_j$. Distances are then calculated between the rows of the scaled indicator matrix. The Hennig and Liao scaling factor $\eta_j$ is implemented in the function \code{distancefactor} in the \proglang{R} package \pkg{fpc} \parencite{fpc_pkg}. In the Appendix we show how this $\eta_j$ is calculated and derive limiting results.  

\textcite{HennigLiao2013} use Euclidean distance between combined standardized numerical and rescaled indicator matrices, but this does not meet the multivariate additivity constraint. Switching to Manhattan distance solves this issue. For categorical variables, the Manhattan distance between rows of rescaled indicator matrices is either $0$ if the categories match, or $2\eta_j$ if they differ. Thus, Hennig and Liao's scaling with Manhattan distance can be integrated into our mixed variable distance by constructing category dissimilarity matrices for each variable $j$ as follows 
\begin{align}
    \bm{\Delta}_{HL_j} = 2\eta_j\bm{\Delta}_{m_j},
\end{align}
where $\bm{\Delta}_{m_j}$ denotes the matching category dissimilarity matrix of appropriate order.\\

\noindent \emph{Indicator: standard deviation scaling}
By considering the columns of each indicator matrix, representing the categories of the categorical variable, as numerical variables that, together, represent one categorical variable, we can also apply the standard deviation scaling procedure for numerical variables to these indicator matrices. The sample covariance matrix corresponding to the indicator matrix $\mathbf{Z}_j$ corresponding to a categorical variable with $q_j$ categories, can be defined as 
\begin{align} \label{Sd}
    \mathbf{S}_{j}= \frac{1}{n}\mathbf{Z}_{j}^{\top}\left( \mathbf{I}-(1/n)\mathbf{11}^{\top} \right) \mathbf{Z}_{j}= \mathbf{D}_{p_{j}} -\mathbf{p}_{j}\mathbf{p}_{j}^{\top},
\end{align}
where $\mathbf{p}_{j}$ denotes the vector of observed proportions and $\mathbf{D}_{p_{j}}$ is a diagonal matrix with $\mathbf{p}_{j}$ as its diagonal. We define a scaled indicator matrix as
\begin{align} \label{StandZ}
    \mathbf{Z}^{*}_{j} = \frac{1}{\sqrt{q_j}}\mathbf{Z}_j \mathbf{S}^{-1/2}_{j_d},
\end{align}
where $\mathbf{S}_{j_d}$ is a diagonal matrix with on its diagonal the diagonal elements of $\mathbf{S}_j$. Note that $\mathbf{Z}^{*}_j$ is scaled such that the variance per category equals $1/q_j$. Consequently, the variance per variable is standardized to $1$. 

Standardizing all variables in this fashion, arranging them next to each other and considering the Manhattan distance between the rows of the resulting standardized \say{super}-indicator matrix, corresponds to distances obtained by applying (\ref{ZDeltaZ'}), where the category dissimilarity matrices $\bm{\Delta}_j$ have as its elements the absolute values of the differences between the rows of the standardized indicator matrix $\mathbf{Z}^{*}_{j}$ corresponding to different categories of variable $j$. That is,
\begin{align*}
    \delta_{ab_{j}}=\frac{1}{\sqrt{q_j}} \left(s^{-1/2}_a + s^{-1/2}_b\right),
\end{align*}
where, $a,b \in {1,\dots q_j}$ denote categories of variable $j$, and, generically, $s_a = p_{j_a}(1-p_{j_a})$, denotes the $a-$th diagonal element of $\mathbf{S}_{j_d}$.\\

\noindent \emph{Indicator: Category dissimilarity scaling}
An alternative way that combines the categorical distance framework and scaling of the indicator matrix, is to directly substitute the scaled indicator matrix into (\ref{ZDeltaZ'}) and apply the scaling to the category dissimilarity matrix rather than to the indicator matrices. Using standard deviation scaling and simple matching, we get
\begin{align*}
    \mathbf{D}_{j} = \mathbf{Z}_j^* \bm{\Delta}_{m_j}\mathbf{Z}_j^{*\top} = \frac{1}{q_j} \mathbf{Z}_j \mathbf{S}^{-1/2}_{j_d} \bm{\Delta}_{m_j} \mathbf{S}^{-1/2}_{j_d} \mathbf{Z}^{\top} = \mathbf{Z}_j \bm{\Delta}_{cds_j} \mathbf{Z}_j^{\top},
\end{align*}
where
\begin{equation}\label{Delta_s}
    \bm{\Delta}_{cds_j}=\frac{1}{q_j} \mathbf{S}^{-1/2}_{j_d} \bm{\Delta}_{m_j}\mathbf{S}^{-1/2}_{j_d}.
\end{equation}

Using this definition, individual pair-wise category dissimilarities become
\begin{align*}
    \delta_{ab}= \frac{1}{q_j}\left[s_{a}s_{b})\right]^{-1/2}. 
\end{align*}

\subsubsection{Association-based category dissimilarity}\label{sect:catdist_ab}
The categorical dissimilarities in Table \ref{table:deltas} define category dissimilarities for each variable independent of the other variables. Alternatively, several authors,  \parencite[e.g.,][]{LH:2005,AD:2007,JCL:2014,ROBNLH:2015} have proposed distance measures for categorical variables that take into account the association between categorical variables. Crucial in the construction of such \say{association-based} category dissimilarities are the joint bivariate distributions. In \textcite{vandeveldenetal2024}, it was shown that extant association-based distances correspond to different ways of quantifying the differences between the conditional distributions. That is, the distributions for categories of one variable over the categories of another variable. 

Let $\mathbf{P}^{jk}$ denote the (empirical) joint distribution for categorical variables $j$ and $k$. The conditional distribution for the row variable $j$ can be obtained by dividing the elements of $\mathbf{P}^{jk}$ by the marginal distribution for the column variable $k$. Using the indicator matrices as introduced before, we can derive the conditional distribution for the categories of variable $j$ over the categories of variable $k$ as  
\begin{equation*}
\mathbf{R}^{j|k}=\left(\mathbf{Z}^{\top}_j\mathbf{Z}_j\right)^{-1}\mathbf{Z}_j^{\top}\mathbf{Z}_k.
\end{equation*}
We can define the category dissimilarities for the categories of variable $i$, by quantifying the differences between the rows of $\mathbf{R}^{j|k}$. 

For illustration purposes, we briefly review, in a bi-variate setting, two variants: 1) Kullback-Leibler variant \parencite{LH:2005} and 2) Total variance distance \parencite{AD:2007, vandeveldenetal2024}. \\

\noindent \emph{Kullback-Leibler}
\textcite{LH:2005} propose to use a symmetric variant of Kullback-Leibler divergence to define the category dissimilarities. That is, the category dissimilarity for categories $a$ and $b$ of variable $j$, is defined as the sum of the Kullback-Leibler divergences between rows $\mathbf{r}_{a}$ and $\mathbf{r}_{b}$:
\begin{align*}
    \delta_{ab}=&\text{KL}\left(\mathbf{r}_{a},\mathbf{r}_{b}\right) +\text{KL}\left(\mathbf{r}_{b},\mathbf{r}_{a}\right)\\
    =& \sum_{l=1}^{q_k} \left[ r_{al}\log_2\left(\frac{r_{al}}{r_{bl}} \right) + r_{bl}\log_2\left( \frac{r_{bl}}{r_{al}} \right) \right]
\end{align*}
where we dropped the superscribed $j|k$'s, $\text{KL}$ denotes the Kullback-Leibler divergence between two discrete distributions and $\log_2$ denotes the base 2 logarithm.\\

\noindent \emph{Total variance distance}
\textcite{AD:2007} proposed a mixed variable distance that combined Euclidean distances for numerical variables with a proposal for the categorical variable distances that, as shown by \textcite{vandeveldenetal2024}, is equivalent to $1/2$ times the Manhattan distances between rows of $\mathbf{R}$:
\begin{equation*}
    \delta_{ab}=1/2 \|\mathbf{r}_{a} - \mathbf{r}_{b}\|_1,
\end{equation*}
where $\|\cdot\|_1$ denotes the $L_1$ norm. This is also known as the total variance distance between two discrete probability distributions. \\

\noindent Note that for these association-based measures, the category dissimilarities become zero if the two variables are independent. For a dependent scenario where the marginal probabilities coincide (i.e., we have the same number of categories and the joint distribution can be represented by a diagonal matrix), the category dissimilarity values using total variance distance is equivalent to the matching case: $\bm{\Delta}_m$. For Kullback-Leibler, perfect dependence leads to category dissimilarity matrix where all non-diagonal values require assessment of $r_{al}log_2\left(\frac{r_{al}}{r_{bl}}\right)$ where $r_{al}=1$ and $r_{bl}=0$. We can resolve this by replacing the zero conditional probability by a small value. The Kullback-Leibler implementation in the \proglang{R} package \pkg{philentropy} uses $r_{bl}=\left(1/10\right)^5$.  

In a multivariate context, bivariate associations need to be aggregated. For instance, an average category dissimilarity can be calculated by considering all pairs of categorical variables or only pairs with associations above a certain threshold may be considered.  \textcite{vandeveldenetal2024} describe several association-based methods, with more available in an \proglang{R} package on GitHub at \url{https://github.com/alfonsoIodiceDE/catdist_package}. It is important to note that different methods produce varying category dissimilarities, so the choice of method affects the overall distances. Care is needed when combining different variants.

\subsection{Commensurability of categorical variable distances} \label{comcatdist}
The general framework for constructing categorical variable distances allows for a straightforward assessment of commensurability. In particular, let $X_{ij}$ denote the $i-$th observation for categorical variable $X_{j}$. Let $q_j$ denote the number of categories for this $j-$th variable and let $\bm{p}_j$ denote the $q_j \times 1$ vector of category probabilities. Using (\ref{ZDeltaZ'}) as distance function, the mean absolute distances can be expressed as follows
\begin{equation}\label{EVCat}
    E[d_j({\bm X}_{i}, {\bm X}_{l})]= \sum_{a,b \in \{1 \dots q_j\}} {p_{j_a}p_{j_b}\delta_{j_{ab}}} = \bm{p}^{\top}_j\bm{\Delta}_j \bm{p}_j,
\end{equation}
where $\bm{\Delta}_{j}$ is the matrix of category dissimilarities for variable $j$ whose elements, $\delta_{j_{ab}}$ quantify the dissimilarity between categories $a$ and $b$. Hence, the mean distance is a function of variable specific properties (the number of categories), distributional aspects ($\bm{p}_j$) and the choice of distance (the category dissimilarity matrix $\bm{\Delta}_j$). We illustrate this below by considering different combinations of distances and distributions. For instance, Table \ref{table:expectedvalues} shows category dissimilarities and expected values for one categorical variable where the distribution over categories is uniform. Moreover, for the association-based variants, the category dissimilarity matrices are presented corresponding to perfect bivariate dependence where the two variables have the same number of categories.

\begin{table}[!ht]
\caption{Distances and corresponding category dissimilarity matrices when category probabilities are equal. That is, $p_j=1/q$ for $j=1 \dots q$. For Hennig-Liao we used $\phi=1/2$ and the limiting result ($n \rightarrow \infty$) for $\eta$. For Inverse OF we used $n=160$. For Kullback-Leibler, $\kappa=5\log_2(10) \approx 16.610$. We dropped all subscripted $j$'s for ease of notation}
\label{table:expectedvalues}
\centering
\begin{tabular}{|l|l|c|c|c|}
\hline
Distance & Cat. dissimilarity $\bm{\Delta}_j$ & $E[d(X_{i}, X_{l})] $&  $ q=2$  & $ q=5  $\\ \hline
Matching  & $\bm{\Delta}_m = \mathbf{1} \mathbf{1}^{\top} - \mathbf{I}$  & $\frac{q-1}{q}$ & $0.5$ & $0.8$\\ 
Eskin & $\bm{\Delta}_e = 2/q^2\bm{\Delta}_m$ & $\frac{2(q-1)}{q^3}$ & $0.250$ & $0.064$ \\ 
Occurence frequency (OF) & $\bm{\Delta}_{OF} = \log^2(q)\bm{\Delta}_m$ & $\log^2(q)\frac{q-1}{q}$ & $0.240$ &  $2.072$\\ 
Inverse OF & $\bm{\Delta}_{IOF} = \log^2(n/q) \bm{\Delta}_m$ & $\log^2(n/q)\frac{q-1}{q}$ & $9.601$ & $9.610$ \\ \hline
Indicator (no scaling) & $\bm{\Delta}_{d}=2\bm{\Delta}_m$ & $\frac{2(q-1)}{q}$ &$1$ & $1.6$\\
Indicator (Hennig-Liao scaling) & $\bm{\Delta}_{HL} = \sqrt{\frac{2q}{q-1}}\bm{\Delta}_m$ & $\sqrt{\frac{2\left(q-1\right)}{q}}$ &$1$ & $1.265$\\
Indicator (St. dev. scaling) & $\bm{\Delta}_{s}=2\sqrt{\frac{q}{q-1}}\bm{\Delta}_m$ &  $2\sqrt{\frac{(q-1)}{q}}$ & $1.414$ & $1.789$\\
Indicator (Cat. dissim. scaling) & $\bm{\Delta}_{cds}=\frac{q}{q-1}\bm{\Delta}_m $ & $1$ &$1$&$1$\\ \hline
%Indicator (MCA scaling) & $\bm{\Delta}_{mca}=\bm{\Delta}_m $ & $\frac{q-1}{q}$ & $0.5$ & $0.8$\\ \hline
Total Variance  & $\bm{\Delta}_{tvd} = \bm{\Delta}_m$  & $\frac{q-1}{q}$ & $0.5$ & $0.8$\\ 
Kullback-Leibler (Le \& Ho) & $\bm{\Delta}_{KL} = \kappa\bm{\Delta}_m$ & $\kappa\frac{q-1}{q}$ & $8.305$ & $13.288$ \\ 
%Chi squared & $\bm{\Delta}_{\chi} = 2q\bm{\Delta}_m$ & $2(q-1)$ & $2$ &  $8$\\ 
%Joint Entropy & $\bm{\Delta}_{JE} = \frac{2\log_2(q)}{q}\bm{\Delta}_m$ & $\frac{2(q-1)\log_2(q)}{q^2}$ & $0.5$ & $0.743$ \\
\hline
\end{tabular}
\end{table}

It is important to note that for uniformly distributed variables, all category dissimilarity matrices in Table \ref{table:expectedvalues} can be expressed as a function of the simple matching variant, that is, all category dissimilarities are constant. Furthermore, we see that, except for the scaled category dissimilarity matrix, the number of categories influences the mean absolute distances. Consequently, when categorical variables with different numbers of categories are used, the overall distance is directly influenced by the number of categories. For non-uniform distributions, analytical results can also be obtained using (\ref{EVCat}) if a distribution is provided. In Appendix \ref{appendix:skewedistributions} we illustrate the effect of different distributions on the distances in Table \ref{table:deltas} by considering, for different numbers of categories, different stylized distributions exhibiting various degrees of skewness.  

\section{Unbiased mixed variable distances} \label{sect:unbiased}
In Sections \ref{sect:numdist} and \ref{sect:catdist}, we showed that depending on distributions, scaling and choice of distance function, multivariate distances typically do not satisfy the commensurability property even in a single variable type setting. Furthermore, from the values in Tables \ref{table:scaling_impact} and \ref{table:expectedvalues} it is clear that combining numerical and categorical variable distances leads to variable type dependent biases. 

To resolve these issues, we can apply weights based on the mean distance as defined in Equation (\ref{commensurableweights}). For numerical data, one can estimate the mean distances based on the observed data. That is, for given $f_j$, we estimate the mean distances as follows  
\begin{equation*}
    m_{j_n} = \frac{1}{n} \sum_{i \neq l}^{n} {d\left(\mathbf{x}^n_{i},\mathbf{x}^n_{l}\right) } = \frac{1}{n} \sum_{i \neq l}^{n} {\lvert f_j(\mathbf{x}^n_i)-f_j(\mathbf{x}^n_l) \rvert}.
\end{equation*}
For $n$ sufficiently large, this estimate converges to the population mean. Hence, by defining weights as
\begin{equation}\label{weightsnum}
w_{j_n}=1/m_{j_n}
\end{equation}
we obtain commensurability. We can therefore interpret the scaled measurement units of the variables as \say{mean distance} units.  

Table \ref{table:expectedvalues} along with Figures \ref{fig:expvals1} through \ref{fig:exvals_assbased} in Appendix \ref{appendix:skewedistributions}, show that for categorical variables the impact of the number of categories, underlying distributions and category dissimilarity definitions on mean distances can be significant. Consequently, in a multivariate setting, choosing the same category variable distance for all variables may not lead to commensurability. Similar to the case for numerical variables, we can resolve this by defining weights based on the mean distance. In particular, for a selected distance (i.e., $\bm{\Delta}_{j}$) and known distributions $\bm{p}_j$, we achieve commensurability by using 
\begin{equation}\label{weightscat}
    w_{j_c} = 1/\left(\bm{p}^{\top}_j\bm{\Delta}_j \bm{p}_j\right).
\end{equation}
If, as is typically the case, $\bm{p}_j$ is unknown, one can either use estimates based on theory, or estimate $\bm{p}_j$ from the data. Note that through the different definitions of $\bm{\Delta}_j$, it is still possible to consider different, yet commensurable, distances. 

Inserting the weights (\ref{weightsnum}) and (\ref{weightscat}) into the general mixed variable distance (\ref{genmixeddist_formula}) results in a family of unbiased distances. That is, regardless of the choice of distance for the numerical and categorical variables, the resulting distance is multivariate additive and commensurable. 

\section{Simulation study}\label{sect:simulation}
As shown in Sections \ref{sect:numdist} and \ref{sect:catdist}, individual variables may have varying impacts on overall distance, with observed differences being more or less influential depending on the data distribution, scaling method, or data type. 

To illustrate the impact of different choices, we conduct a simulation study generating high-dimensional numerical data based on an underlying low-dimensional structure. We then discretize a subset of the variables to create mixed data. Next, we analyze how various mixed-distances and levels of discretization (i.e., number of categories) affect the results.

For the distances, we consider the following variants: 1) \textit{Numerical data} Manhattan distance on the original generated data (without discretizing any variables) 
2) \textit{Naive approach}: Euclidean distance on the complete standardized matrix where categorical variables are transformed to indicator matrices (one-hot-encoding); 3) \textit{Hennig-Liao}: Euclidean distance on the matrix with numerical variables standardized and categorical variables transformed to indicator matrices (one-hot-encoding) and scaled using the Hennig-Liao scaling factor; 4) \textit{Additive Hennig-Liao}: Manhattan distance with numerical variables standardized and Hennig-Liao scaling of the indicator matrices (i.e., $\bm{\Delta}_{HL}$); 5) \textit{Gower}: range normalized numerical variables and simple matching for categorical variables; 6) \textit{Unbiased independent}: commensurable distance using simple matching for the categorical variables. 7) \textit{Unbiased standardized}: commensurable distance using category dissimilarity scaling for the categorical variables; 8) \textit{Unbiased dependent}: commensurable association-based mixed distance using PCA scaling of the numerical, and total variance distance for the categorical variables. Note that the variants 2) and 3) are not multivariate additive. We include them here as benchmark of methods used in practice. 

\subsection{Data generating process}
For each simulation instance, we first generate a $2-$dimensional configuration, $\mathbf{Y}$, by sampling $n \times 2$ times from a uniform distribution on $[-2,2]$ and constructing an orthogonal basis. Next, we post-multiply $\mathbf{Y}$ by a random $2 \times p$ matrix, $\mathbf{N}$, with elements drawn from a uniform distribution on $[-2, 2]$. The resulting $n \times p$ matrix, $\mathbf{X}_o = \mathbf{YN}$, gives a numerical, $p-$dimensional \say{observed} matrix corresponding to the $2-$dimensional configuration $\mathbf{Y}$. We add random noise to these observations using a normal distribution with $\sigma = 0.03$ (approximately half the standard deviation of the generated values). Finally, we discretize $p_{cat}$ columns of $\mathbf{X}_o$ by splitting their ranges into the desired number of intervals. This setting allows us to appraise the effect of the type of variable \emph{within} a chosen mixed-distance variant, and with respect to retrieval of the underlying configuration. 

\subsection{Variable specific effects}\label{sect:varspecific}
We generate $100$ instances, each with $n=500$ observations on $p=6$ variables. For the mixed datasets, two variables remain numerical, the other $4$ variables are converted into $2, 3, 5$ and $9$ categories, respectively. To assess the effect of individual variables we consider the effect of leaving one-variable-out on the distance matrix. In particular, for each instance, we consider the absolute values of the differences between the leave-one-out distance and the distance using all variables. Furthermore, as the different mixed distance variants are on different scales, we also calculate a relative measure by dividing the absolute difference for each variable, by the sum of these differences over all variables. 

The results are summarized in Figure \ref{fig:varimportances_mad}. The left panel shows that, for the additive variants, the observed values align with the theoretical ones in Tables \ref{table:scaling_impact} and \ref{table:expectedvalues}. For example, for the numerical variables, we see that the mean values for the non-commensurable additive variants (Hennig-Liao, numerical and Gower\footnote{The Gower implementation in \code{daisy} adds an additional scaling factor by dividing Gower distance by the number of variables. Here we did not apply this scaling}) are between the corresponding values in Table \ref{table:scaling_impact} for the normal and uniform distributions. (The generated data are a mixture of uniform and normal). For the categorical variables we see a similar correspondence to the values in Table \ref{table:expectedvalues}. For the Euclidean variants, we did not derive theoretical mean values for individual variables as these depend on the number of variables. In general, for Euclidean distance, individual mean values decrease as the number of variables increases. We can see this effect by comparing the two Hennig-Liao implementations. For both variants, having more categories increases the distances due to that variable. 

For the two independent unbiased variants (\textit{Uind} and \textit{Ustd}) the values are, by construction, $1$. For the dependent variant, however, leaving a variable out also impacts the distances of the other categorical variables. Hence, for this variant the differences compared to the full distance are close to, but not exactly, $1$.

To assess the contributions of variables of different types to the overall distance, we consider the relative variable contributions depicted in the right panel of Figure \ref{fig:varimportances_mad}. We see that for the numerical data, as well as for the unbiased variants, the relative effect is either exactly $1/6$ or close to this value. For the other variants, distances are affected by type (i.e., numerical or categorical) as well as by the number of categories. For \textit{Naive}, and, to a lesser extent, \textit{Gower}, categorical variables contribute more to the overall distance. \textit{Hennig-Liao} reduces the influence of categorical variables, increasing the relative contribution of numerical variables to the distance. 

\begin{figure}[!ht]
\center
\includegraphics[scale=.7]{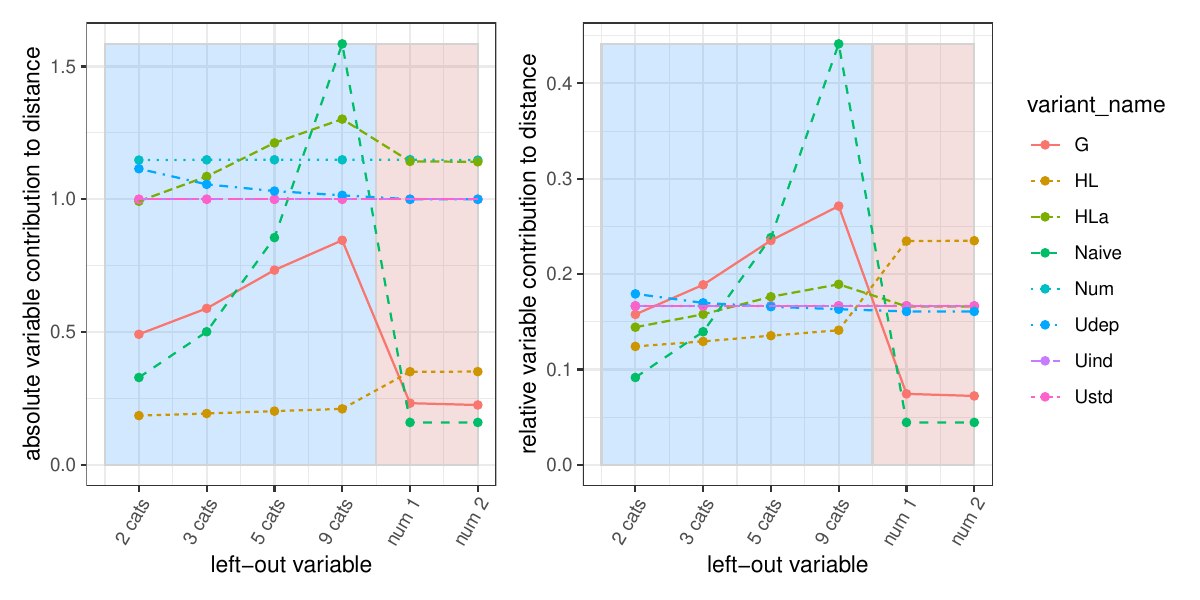}
\caption{Effect of leaving variables out on distance. The first 4 variables (blue background) are categorical with indicated numbers of categories respectively. The last two are numerical. For the variants: G = Gower, HL = Hennig-Liao, HLa = Additive Hennig-Liao, Naive = Naive approach, Num = original data (all variables numerical), Udep = Unbiased dependent, Uind = Unbiased independent, Ustd = Unbiased standardized}
\label{fig:varimportances_mad}
\end{figure}

To see how these differences impact subsequent analyses, we consider the effect of individual variables on a $2-$dimensional classical multidimensional scaling (MDS) solution. That is, for all distance variants, we compare the MDS solutions when leaving one-variable-out, with the solution using all variables. To measure the difference, we consider the alienation coefficient \parencite{BL:1985}, which lies between zero and one and can be seen as a measure of unexplained variance. Hence, higher values indicate more influence on the solution. The results are visualized in Figure \ref{fig:alienations}. Note that the effects on mean distances do not carry-over one-to-one to the effect on the low-dimensional configurations. In particular, although the categorical variables impact the MDS solution more than the numerical ones for Gower and Naive, the effect of the number of categories is reversed. That is, variables with fewer categories have more impact on the MDS solutions. In general, low category variables allow for little differentiation between observations and projecting to a low-dimensional space suffers from this for all distances. 

\begin{figure}[!ht]
\center
    \includegraphics[scale=.7]{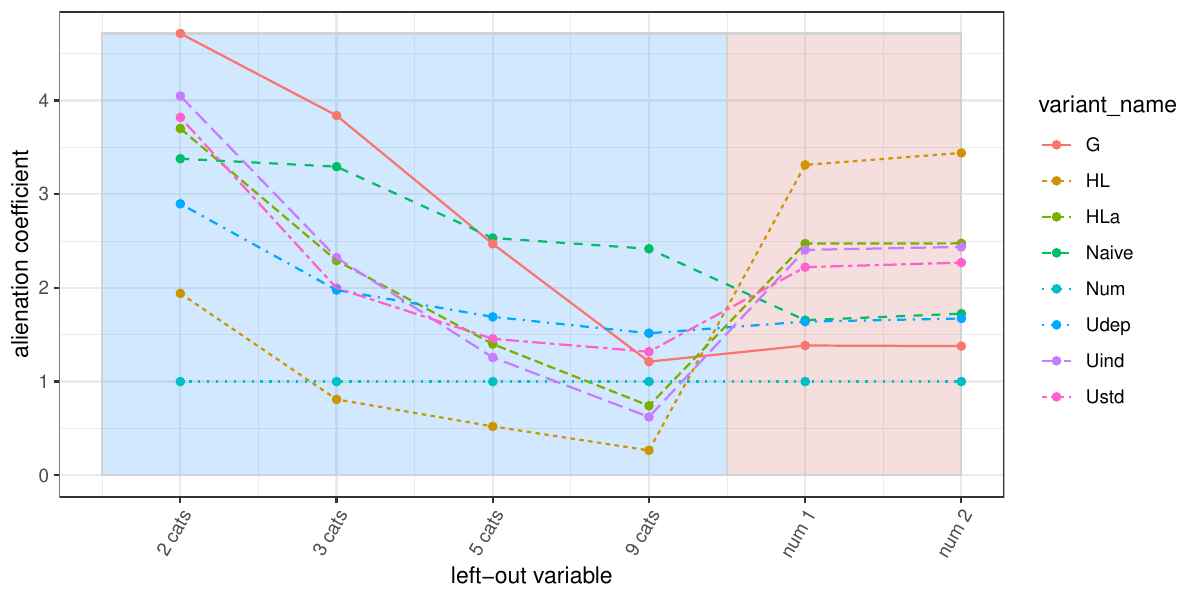}
\caption{Effect of leaving variables out on the MDS solution. The first 4 variables (blue background) are categorical with indicated numbers of categories respectively. The last two are numerical. Variant abbreviations same as in Fig \ref{fig:varimportances_mad}}
\label{fig:alienations}
\end{figure}

\subsection{Retrieval of the true configuration}
To assess the effect of the different mixed-distances on retrieving the underlying configuration, we generate $100$ instances, each with $n=500$ observations on $p=6$ variables. Three variables remain numerical, and three variables are all discretized into the same number of categories, i.e., either $2, 3, 5$ or $9$. For each dataset, we create a distance matrix using the same mixed distances as before. Next, we apply classical MDS and calculate the alienation coefficient of this solution with the true $2-$dimensional configuration. The results are presented in Figure \ref{fig:alienation_boxplots}.

We see that different distances perform similarly when only binary variables are added. However, as the number of categories increases, the variation in results (as indicated by the sizes of the boxes in Figure \ref{fig:alienation_boxplots}) gets larger. Results for the unbiased variants appear to improve when more categories are used. \textit{Unbiased dependent} performs better when dealing with more than $2$ categories, as it effectively distinguishes categories based on their associations. Recall that the high-dimensional data ($p=6$) were generated by expanding $2-$dimensional data, therefore, meaningful associations are present. The association-based approach utilizes these dependencies.

The results in this plot are in line with the findings of the variable specific effects in Figure \ref{fig:varimportances_mad}. For example, comparing Figures \ref{fig:varimportances_mad} and \ref{fig:alienations} we see how distances using \textit{Hennig-Liao} over-emphasize the numerical variables. Similarly, by using range normalization, distance due to the numerical variables is undervalued in \textit{Gower} distance. Consequently, when the true configuration relies more on numerical variables, \textit{Gower} distance performs worse. 

\begin{figure}[!ht]
\center
\includegraphics[scale=.65]{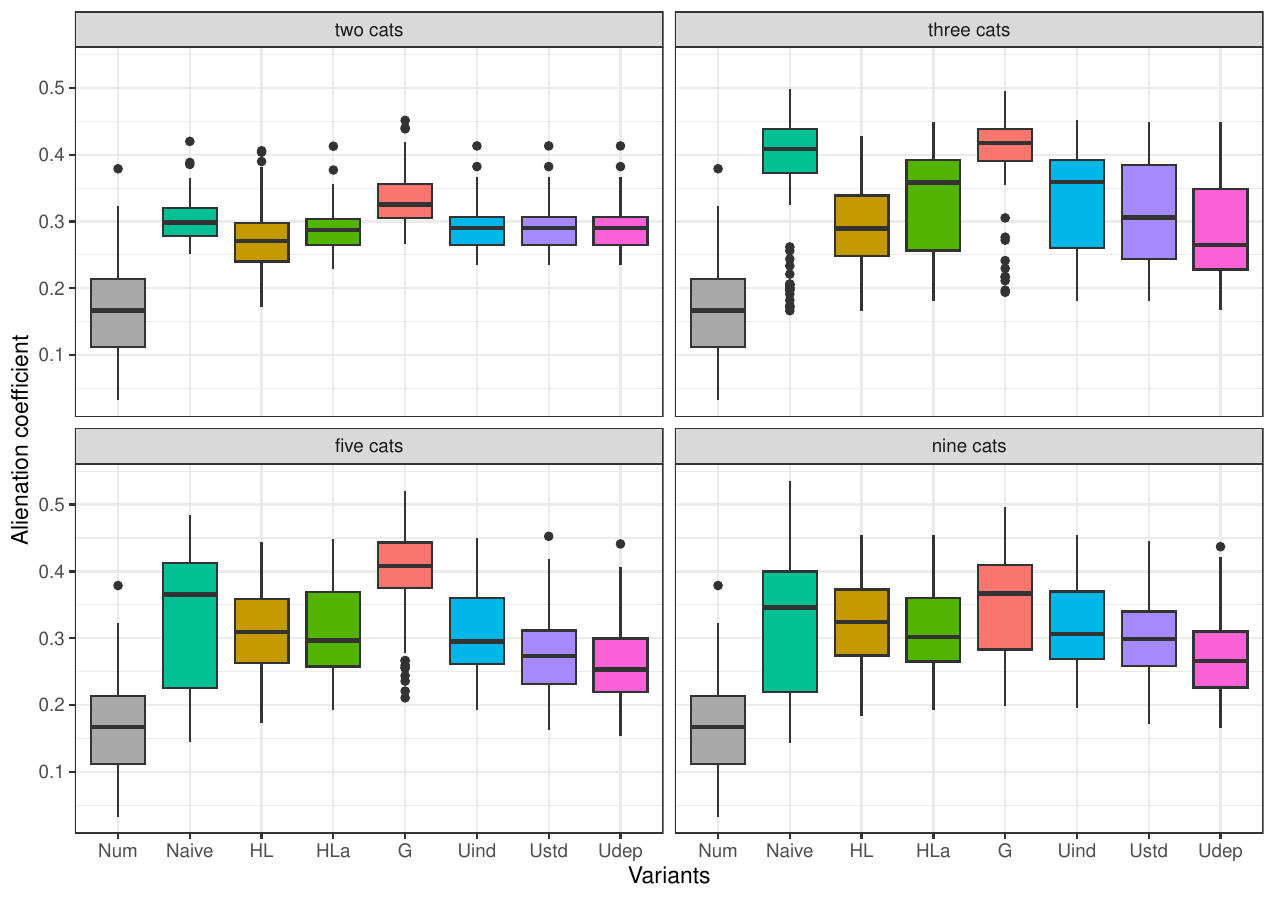}
\caption{Distribution of alienation coefficients (with respect to the true configuration) of $2-$dimensional classical MDS solutions using different mixed distances, where $3$ out of $6$ numerical variables have been discretized into the indicated number of categories}
\label{fig:alienation_boxplots}
\end{figure}

\section{Illustration: FIFA player data}\label{sect:illustration}
The simulation study confirmed, in a controlled setting, the theoretical findings of Sections \ref{sect:numdist} and \ref{sect:catdist}. To further illustrate these findings, we apply a selection of the mixed variable distances to an empirical dataset. In particular, we consider a sample from the FIFA player data 2021 that can be found at \href{https://www.kaggle.com/datasets/stefanoleone992/fifa-21-complete-player-dataset}{kaggle.com}\footnote{The complete link is \url{https://www.kaggle.com/datasets/stefanoleone992/fifa-21-complete-player-dataset}}. In particular, we select $408$ players from the Dutch league. Furthermore, we select $7$ categorical and $7$ numerical variables. Our selection of variables is based on our aim to illustrate how different mixed distances are affected by different measurement types and distributions. The categorical variables have categories ranging from $2$ (\say{preferred foot}) to $25$ (\say{position}). With the exception of the variable \say{club}, the distributions over categories are typically uneven. For the numerical variables, the wage and, particularly, transfer value variables are skewed. See Figure \ref{fig:fifa_desc} in Appendix \ref{appendix:FIFA} for an overview of the distributions of the variables. 

For this dataset, there is no known underlying \say{truth} concerning the distances or subsequent distance-based analysis. We can, however, illustrate how different choices for distances lead to trivial, scale and measurement type-related, biases. In fact, based on the results presented in Sections \ref{sect:numdist} and \ref{sect:catdist}, we expect that for all non-commensurable variants, distances due to the categorical variables increase with the number of categories. Furthermore, we expect that for Gower, the distance due to categorical variables exceeds that of the numerical variables. The Hennig-Liao scaling is designed to remove this bias towards categorical variables.

Similar to our analysis of simulated data, we first consider the effects of leaving individual variables out on the overall distance. Figure \ref{fig:fifa_mad} presents the results where the left panel gives the absolute, and the right panel gives the relative contributions. We see that our expectations concerning the additive variants are confirmed.  
Furthermore, similar to our findings in the simulation study, we see that for the non-additive Euclidean variants, where variable specific effects depend on the number of variables, the scaling of categorical variables as proposed by Hennig-Liao, leads to an overemphasis of distances due to the numerical variables. 

\begin{figure}[!ht]
\center
\includegraphics[scale=.65]{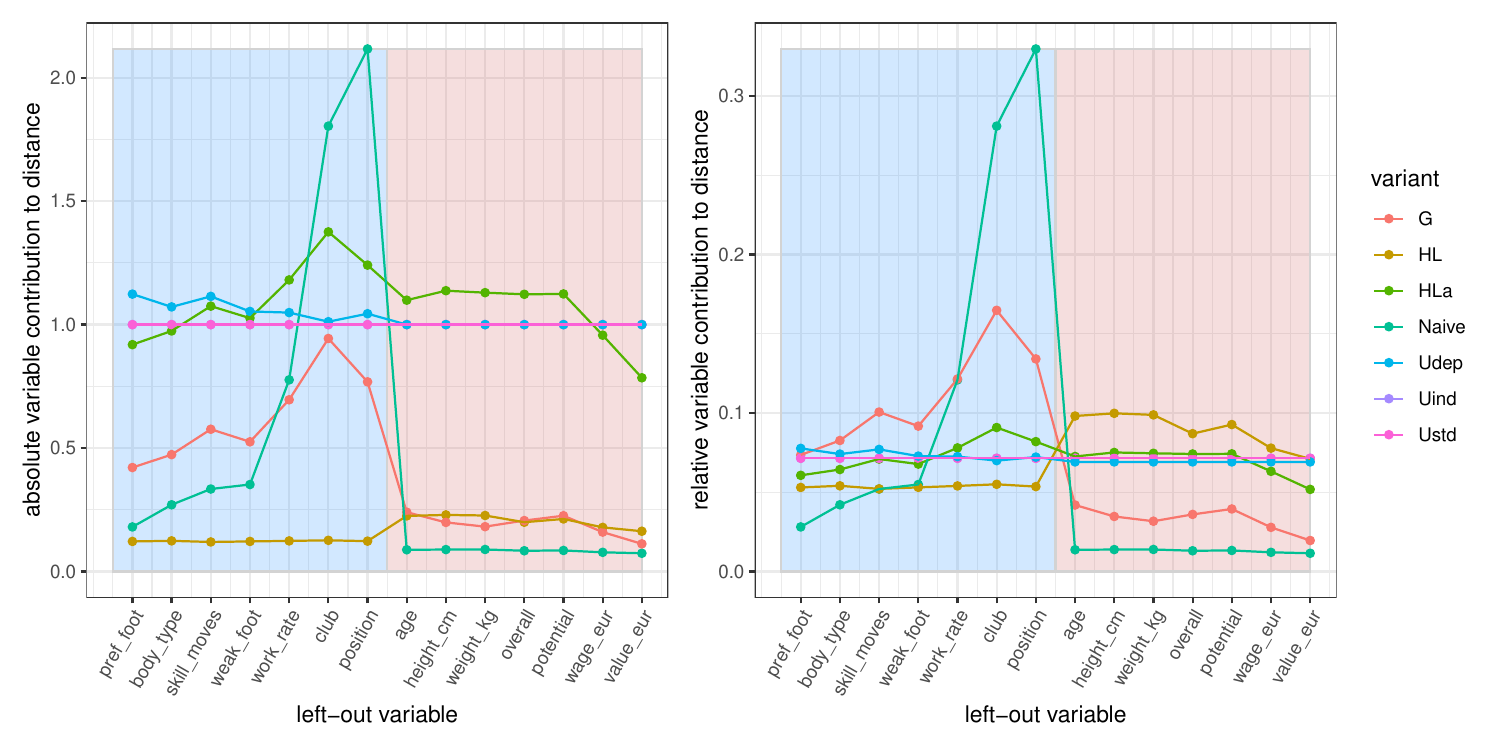}
\caption{Variable importance with respect to full distance matrix. Left panel, absolute values, right panel, relative values}
\label{fig:fifa_mad}
\end{figure}

\begin{figure}[!ht]
\center
\includegraphics[scale=.7]{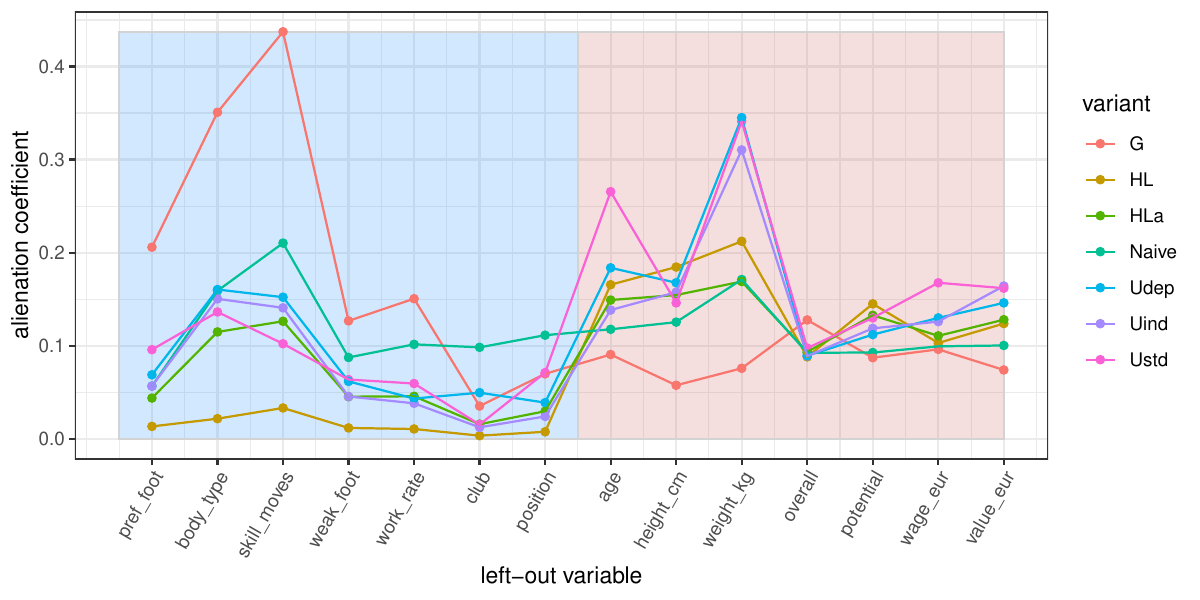}
\caption{Variable importances for alienation coefficient with respect to full distance matrix}
\label{fig:fifa_ac}
\end{figure}

Similar to our analysis of variable specific effects in Section \ref{sect:varspecific}, we consider variable importance with respect to the distances as well as to the $2-$dimensional MDS solutions. Note, however, that there is no true underlying, numerical, configuration. Hence, we can only consider the effect of individual variables with respect to the MDS solution using all variables. The results for variable importance and alienation coefficients are presented in Figures \ref{fig:fifa_mad} and \ref{fig:fifa_ac} where, with the exception of the numerical variant-- which is not possible in this truly mixed setting-- we consider the same variants as in Section \ref{sect:simulation}. 

In Figure \ref{fig:fifa_mad} we see that for the \textit{Naive} variant, distances from categorical variables are much larger than from numerical ones, as expected. Similartly, \textit{Gower} distance,  shows a similar bias. In contrast, \textit{Hennig-Liao} scaling with Euclidean distance over-corrects, making numerical variables overly dominant. We can also see, that a drop in mean values for the last two numerical values. This is due to the skewness of the distributions of these variables. 

Important to note is that although for the unbiased variants the mean distances per variable are equivalent, when it comes to impact on the MDS solutions, there is variability. That is, commensurability does not mean that the variables play the same role in determining an underlying configuration. Or, more generally, that this is the case for any subsequent analysis. Furthermore, the biased variants not only lead to predictable mean variable distances, they also bias the MDS, or any other subsequent, analysis (Figure \ref{fig:fifa_ac}).  

\section{Conclusion} 
In this paper, we considered the problem of biased distance. That is, distances that by construction under- or over-represent contributions of variables based on variable type (e.g., numerical or categorical), distribution (e.g., symmetric or skewed) or scale (e.g., measurement scale or the number of categories). We introduced a general multivariate mixed variable distance that allows for an easy, yet flexible, way to correct for biased distances. Our proposed general distance allows for many different implementations. In particular, concerning distances due to categorical variables, such flexibility is desirable as there is no unique way to define distance for such variables. 

Defining mixed-variable distances has recently received significant attention \parencite{MS:2023,GT:2024, BojGrane:2024}. Most methods propose a variable weighting system \parencite{AD:2007,HennigLiao2013,MS:2023,GT:2024}, or emphasize leveraging both numerical and categorical variable types \parencite{Huang1998,foss2016semiparametric}. Instead of adding another variant, we focus on the theoretical properties of distances for different variable types to construct a mixed distance formulation. This approach aims to address these issues in a general, rather than case- or application-specific, manner. Finally, it is important to emphasize that we do not claim optimality of any kind for our proposed method. In fact, the idea of an unbiased distance is that we treat all variables equivalently, that is, variable types, scales or measurement levels should not trivially impact the distance. If there is reason to believe that for a certain analysis, certain variables are more (less) important than others, this can and should be taken into account. However, in many, especially unsupervised, distance-based methods, there is no objective way of deciding, beforehand, whether this is the case. 

\section*{Source code}
The simulation study and the illustration example presented in this article can be reproduced following the instructions at \url{https://alfonsoiodicede.github.io/blogposts_archive/Unbiased_mixed%20_variables_distance_supplementary_mat.html}. 

\printbibliography

\appendix
\section{Appendix}
\subsection{Mean distance results: Uniform distribution}\label{appendix:exactresultsU}
Let $X \sim U\left(a,b\right)$ denote a random variable from a uniform distribution on the interval $[a,b]$, where $a$ and $b$ are known. Furthermore, let $X^{s}, X^{rn}$, and $X^{rr}$ denote the standardized, range normalized and robust range normalized versions of $X$. 
Then:
\begin{theorem}
  $E[\lvert X^{s}_{i} -X^{s}_{l}\rvert ]=c=\frac{1}{3}\sqrt{12} (\approx 1.15)$
  \end{theorem}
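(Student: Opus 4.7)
My plan is to reduce the quantity $E[|X^{s}_i - X^{s}_l|]$ to a rescaling of the simpler expectation $E[|X_i - X_l|]$ for the raw uniform variates, and then evaluate the latter directly. The key observation is that standardization is affine: $X^s = (X-\mu)/\sigma$ with $\mu = (a+b)/2$ and $\sigma^2 = (b-a)^2/12$, so
\begin{equation*}
    |X^{s}_i - X^{s}_l| = \frac{|X_i - X_l|}{\sigma} = \frac{\sqrt{12}}{b-a}\,|X_i - X_l|.
\end{equation*}
Taking expectations, the claim reduces to showing $E[|X_i - X_l|] = (b-a)/3$, because then $E[|X^s_i - X^s_l|] = \sqrt{12}/3$, matching the stated constant.

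To establish $E[|X_i - X_l|] = (b-a)/3$, I would present the computation via the triangular distribution of the difference $Y = X_i - X_l$. Since $X_i, X_l$ are independent $U(a,b)$, standard convolution gives the density $f_Y(y) = (b-a-|y|)/(b-a)^2$ on $[-(b-a), b-a]$. Then by symmetry
\begin{equation*}
    E[|Y|] \;=\; 2\int_{0}^{b-a} \frac{y\,(b-a-y)}{(b-a)^2}\,dy \;=\; \frac{2}{(b-a)^2}\left[\frac{(b-a)^3}{2} - \frac{(b-a)^3}{3}\right] \;=\; \frac{b-a}{3}.
\end{equation*}
Alternatively, one can use the identity $E[|X_i-X_l|] = 2\bigl(E[\max(X_i,X_l)] - E[X_i]\bigr)$ combined with $E[\max(X_i,X_l)] = a + 2(b-a)/3$ for the maximum of two i.i.d.\ uniforms, which yields the same answer and may be more transparent.

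There is no genuine obstacle here; the only mild subtlety is justifying the scale-invariance step cleanly (showing the answer does not depend on $a$ or $b$, which is immediate from the affine standardization). Combining the two pieces gives
\begin{equation*}
    E[|X^{s}_i - X^{s}_l|] \;=\; \frac{\sqrt{12}}{b-a}\cdot\frac{b-a}{3} \;=\; \frac{\sqrt{12}}{3},
\end{equation*}
which is the claimed constant $c$, independent of $a$ and $b$, confirming both the numerical value $\approx 1.15$ and the value reported in Table \ref{table:scaling_impact} for the uniform row with SD scaling.
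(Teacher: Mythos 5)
Your proof is correct and follows essentially the same route as the paper: both reduce the claim, via the affine form of standardization, to the fact that the mean absolute difference of two i.i.d.\ uniforms on an interval of length $L$ equals $L/3$, applied to an effective interval of length $\sqrt{12}$. The only difference is that you actually derive the $L/3$ identity (via the triangular density of the difference), whereas the paper simply states it as known.
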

\begin{theorem}
       $E[\lvert X^{rn}_{i} -X^{rn}_{l}\rvert ]=c=\frac{1}{3}$
  \end{theorem}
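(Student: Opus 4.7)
The plan is to reduce the computation to the canonical uniform distribution on $[0,1]$ and then carry out an elementary double integral. Since $X \sim U(a,b)$, range normalization via $X^{rn} = (X - a)/(b - a)$ yields $X^{rn} \sim U(0,1)$, where I am using the population endpoints $a$ and $b$ in place of the sample minimum and maximum (justified either as the asymptotic limit, or directly if we interpret $x^{min}_j, x^{max}_j$ in the definition as the supports of the distribution). Thus the problem reduces to computing $E[\lvert U_1 - U_2 \rvert]$ for two i.i.d.\ $U(0,1)$ random variables.

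The second step is to evaluate that expectation by direct integration. I would exploit the symmetry $|u-v| = |v-u|$ to write
\[
E[\lvert U_1 - U_2 \rvert] = \int_0^1 \!\!\int_0^1 \lvert u - v\rvert \, du\, dv = 2 \int_0^1 \!\!\int_0^u (u - v)\, dv\, du,
\]
then carry out the inner integral to get $u^2/2$ and the outer integral to get $1/3$. This matches the numerical value $1/3 \approx 0.333$ reported in Table \ref{table:scaling_impact} for the Uniform row with range scaling at $n=500$.

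There is no real obstacle; the only conceptual point worth flagging is the relationship between the sample-based normalization used in the main text and the population-based one used here. One could address this by arguing that the sample minimum and maximum converge almost surely to $a$ and $b$ as $n \to \infty$, and that $E[\lvert X^{rn}_i - X^{rn}_l\rvert]$ is a bounded continuous functional of the sampling distribution, so the sample-based quantity converges to $1/3$. Alternatively, one can state upfront that ``range normalization'' here refers to normalization by the known support, which matches the exact-result framing announced in Section \ref{sect:comnumdist}.
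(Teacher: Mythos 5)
Your proof is correct and follows essentially the same route as the paper: range normalization with the known endpoints $a,b$ reduces $X^{rn}$ to $U(0,1)$, and then $E[\lvert U_1 - U_2\rvert] = 1/3$. The only difference is that you derive this last expectation by explicit double integration, whereas the paper cites it as a known fact; your remark on sample versus population endpoints is a reasonable clarification but does not change the argument.
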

\begin{theorem}
        $E[\lvert X^{rr}_{i} -X^{rr}_{l}\rvert ]=c=\frac{2}{3}$
  \end{theorem}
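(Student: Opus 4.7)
The plan is to reduce the problem to the canonical case of a uniform on a symmetric interval and then evaluate a standard expected absolute difference. First I would, as in Theorems~1 and~2, treat the percentiles appearing in the definition of $X^{rr}$ as the theoretical population percentiles of the $U(a,b)$ distribution (equivalently, take a limit of the sample version), so that for $X\sim U(a,b)$ one has $x^{0.50}=(a+b)/2$, $x^{0.25}=(3a+b)/4$, and $x^{0.75}=(a+3b)/4$, giving an interquartile range equal to $(b-a)/2$.

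Next I would substitute these into the definition to get
\begin{equation*}
X^{rr}=\frac{X-(a+b)/2}{(b-a)/2}=\frac{2X-(a+b)}{b-a},
\end{equation*}
which is an affine transformation of $X$ sending $[a,b]$ to $[-1,1]$. Since linear transformations of uniform random variables are again uniform, $X^{rr}\sim U(-1,1)$. In particular the parameters $a,b$ drop out, as they should for a scale/location invariant quantity.

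The remaining step is to compute $E[|Y_i-Y_l|]$ for independent $Y_i,Y_l\sim U(-1,1)$. I would do this either by the direct double integral
\begin{equation*}
E[|Y_i-Y_l|]=\frac{1}{4}\int_{-1}^{1}\!\!\int_{-1}^{1}|y_1-y_2|\,dy_1\,dy_2,
\end{equation*}
split on the sign of $y_1-y_2$, or by the scaling shortcut: if $U_i,U_l$ are iid $U(0,1)$ with the classical result $E[|U_i-U_l|]=1/3$ (already used in Theorem~2 above), then $Y=2U-1$ gives $|Y_i-Y_l|=2|U_i-U_l|$, so $E[|Y_i-Y_l|]=2/3$.

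There is no real obstacle here; the statement is essentially a one-line consequence of Theorem~2 combined with the observation that robust-range normalization maps $U(a,b)$ to $U(-1,1)$ rather than to $U(0,1)$, picking up the factor of $2$. The only subtlety worth flagging explicitly is the interpretation of the percentiles as population (rather than sample) quantiles, matching the convention used for $s_j$ in Theorem~1, so that the claim is exact rather than asymptotic.
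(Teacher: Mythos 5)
Your proposal is correct and follows essentially the same route as the paper: identify the population median and interquartile range of $U(a,b)$, observe that robust-range scaling yields a uniform variable on an interval of length $2$, and apply the fact that the expected absolute difference of two independent uniforms is one third of the interval length. Your statement that the image interval is $[-1,1]$ is in fact slightly more careful than the paper's (which records an interval of the correct length $2$ but with an incorrect location, harmless since only the length matters for the expected absolute difference).
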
 
\begin{proof}
  Standardization gives
\begin{align*}
    X^{s} = \frac{X^{s}-\mu}{\sigma_{x}}= \frac{X^{s}-(a+(b-a)/2)}{(b-a)/\sqrt(12)} \sim U \left(-\frac{\sqrt{12}}{2},\frac{\sqrt{12}}{2}\right)
\end{align*}
The expected value of the absolute difference between two independent uniform random variables on $[a,b]$ is $(b-a)/3$. Hence 
\begin{equation*}
    E[\lvert X^{s}_{i} -X^{s}_{l}\rvert ]=c=\frac{1}{3}\sqrt{12}.
\end{equation*} 
\end{proof} {\hfill $\blacksquare$}

\begin{proof}
Range normalization gives \begin{align*}
    X^{rn}= \frac{X^{rn}-a}{\left( b-a \right)} \sim U \left(0,1 \right),
\end{align*}
so that
\begin{equation*}
    E[\lvert X^{rn}_{i} -X^{rn}_{l}\rvert ]=c=\frac{1}{3}.
\end{equation*}
\end{proof} {\hfill $\blacksquare$}
\begin{proof}
Subtracting the median and dividing by the interquartile range for a $U\left(a,b\right)$ random variable gives
\begin{align*}
    X^{rr}= \frac{X^{rr}-(a+(b-a)/2)}{\left(b-a \right)/2} \sim U \left(\frac{2a}{b-a}, \frac{2b}{b-a}\right).
\end{align*}
Resulting in
\begin{equation*}
    E[\lvert X^{rr}_{i} -X^{rr}_{l}\rvert ]=c=\frac{2}{3}.
\end{equation*}
 \end{proof} {\hfill $\blacksquare$}
 
\section{Mean distance results: Normal distribution}\label{appendix:exactresultsN}
Let  $X \sim N\left(\mu,\sigma\right)$ denote a random variable from a normal distribution with known mean $\mu$ and standard deviation $\\sigma$. Furthermore, let $X^{s}, X^{rn}$, and $X^{rr}$ denote the standardized, range normalized and robust range normalized versions of $X$. We have
\begin{theorem}
    $E[\lvert X^{s}_{i} -X^{s}_{l} \rvert ]=2\sqrt{\left(1/\pi\right)} (\approx 1.13)$
\end{theorem}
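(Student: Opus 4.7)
The plan is to exploit the fact that standardization maps $X$ to a standard normal, and then use the well-known mean of a half-normal distribution to evaluate the expected absolute difference.

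First I would note that if $X \sim N(\mu,\sigma)$, then $X^s = (X-\mu)/\sigma \sim N(0,1)$. Consequently, for two independent copies $X_i^s$ and $X_l^s$, the difference $Y = X_i^s - X_l^s$ is normally distributed with mean $0$ and variance $2$, i.e., $Y \sim N(0, 2)$.

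Next I would invoke the standard formula for the mean of the folded (half) normal: for $Y \sim N(0,\tau^2)$, one has $E[|Y|] = \tau\sqrt{2/\pi}$. Applying this with $\tau = \sqrt{2}$ gives
\begin{equation*}
E[|X_i^s - X_l^s|] = \sqrt{2} \cdot \sqrt{2/\pi} = \frac{2}{\sqrt{\pi}} = 2\sqrt{1/\pi},
\end{equation*}
which matches the stated constant $c \approx 1.13$ and is consistent with the $n=500$ Normal row in Table \ref{table:scaling_impact}.

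There is no serious obstacle here; the only care needed is to justify the folded-normal mean formula (which follows directly from $\int_0^\infty y\,\phi_\tau(y)\,dy = \tau/\sqrt{2\pi}$, doubled by symmetry of $Y$ around zero) and to confirm independence of $X_i$ and $X_l$ so that the variance of the difference is the sum of the variances. If a fully self-contained derivation is desired, I would include one line computing $E[|Y|] = 2\int_0^\infty y\,(2\pi\tau^2)^{-1/2}\exp(-y^2/(2\tau^2))\,dy = \tau\sqrt{2/\pi}$.
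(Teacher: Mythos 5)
Your proposal is correct and follows essentially the same route as the paper: standardize to $N(0,1)$, note that the difference of two independent standardized copies is $N(0,2)$, and apply the folded-normal mean formula $E[\lvert Y\rvert]=\tau\sqrt{2/\pi}$ with $\tau=\sqrt{2}$. The only difference is cosmetic --- you offer to derive the folded-normal formula from the Gaussian integral, whereas the paper simply cites it.
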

\begin{theorem}
   $ E[\lvert X^{rn}_{i} -X^{rn}_{l} \rvert ]=\frac{1}{x^{0.75}} \sqrt{\left(1/\pi\right)} (\approx 0.84)$
\end{theorem}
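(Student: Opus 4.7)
The plan is to mimic the Uniform-case proofs: write the scaled variable explicitly in terms of $X$, determine the distribution of a pairwise difference, and then invoke the known mean of a half-normal. First, I would observe that for independent draws $X_i, X_l \sim N(\mu,\sigma)$, the difference $X_i - X_l$ is $N(0, 2\sigma^2)$, so $|X_i - X_l|$ follows a half-normal distribution with $E[|X_i - X_l|] = 2\sigma/\sqrt{\pi}$. This single pre-computation drives both Normal-section theorems (dividing by $\sigma$ directly yields the standardized result $2/\sqrt{\pi}\approx 1.13$ of Theorem 2.1).

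Next I would identify the scaling factor. By symmetry, the median of $N(\mu,\sigma)$ is $\mu$, while the $0.25$ and $0.75$ quantiles are $\mu \pm \sigma z_{0.75}$, where $z_{0.75}\approx 0.6745$ is the $0.75$ quantile of the standard normal. Hence the interquartile range equals $2\sigma z_{0.75}$, and the robust-range normalized variable is $X^{rr} = (X-\mu)/(2\sigma z_{0.75})$. The symbol $x^{0.75}$ appearing in the claimed formula must refer to this standard-normal quantile; otherwise the expression would not be scale-free in $\sigma$.

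Combining the two steps gives
\[
  E[|X^{rr}_i - X^{rr}_l|] \;=\; \frac{1}{2\sigma z_{0.75}} \cdot \frac{2\sigma}{\sqrt{\pi}} \;=\; \frac{1}{z_{0.75}}\sqrt{\frac{1}{\pi}},
\]
which matches the stated $\tfrac{1}{x^{0.75}}\sqrt{1/\pi}$ and, with $1/0.6745 \approx 1.483$ and $1/\sqrt{\pi}\approx 0.564$, evaluates to $\approx 0.84$, coinciding with the simulated value for $n=500$ in Table \ref{table:scaling_impact}.

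The main obstacle is bookkeeping rather than mathematical depth: the theorem as printed carries the superscript $rn$, yet the appearance of $x^{0.75}$ in the denominator and the value $0.84$ both indicate that the intended quantity is the robust-range ($rr$) version; range normalization is in any case ill-posed for a Normal, whose support is unbounded. I would therefore flag and reconcile this labeling inconsistency before executing the short computation above, and I would explicitly state the convention that $x^{0.75}$ denotes the $0.75$ quantile of the \emph{standardized} distribution so that the formula is well defined independently of $\sigma$.
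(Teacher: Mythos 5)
Your proof is correct and follows essentially the same route as the paper's own: reduce the difference of two independent normals to a centered normal with standard deviation $\sqrt{2}\sigma$, apply the folded/half-normal mean formula, and divide by the interquartile range $2\sigma x^{0.75}$ of the normal, which cancels $\sigma$ and yields $\frac{1}{x^{0.75}}\sqrt{1/\pi}\approx 0.84$. Your observation that the superscript should read $rr$ (robust range) rather than $rn$ is well taken --- the paper's own proof carries the same mislabeling while explicitly invoking robust range scaling --- and your convention that $x^{0.75}$ is the standard-normal upper quartile matches the paper's.
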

\begin{proof}
The distribution of the difference between two identically distributed normal random variables is also normal with mean zero and standard deviation $\sqrt{2}\sigma$. Furthermore, the absolute value of a centered normal random variable, say $Y$ is \say{ folded normal} with as its expected value
\begin{equation}\label{EV_fNormal}
    E[\lvert Y\rvert ]=\sigma\sqrt{\left(2/\pi\right)}.
\end{equation}
Using standard deviation standardization (assuming known $\sigma$) we have
\begin{align*}
    X^{s}= \frac{X-\mu}{\sigma} \sim N \left(0,1\right).
\end{align*}
Hence, 
\begin{equation*}
    E[\lvert X^{s}_{i} -X^{s}_{l} \rvert ]=2\sqrt{\left(1/\pi\right)}.
\end{equation*}
\end{proof} {\hfill $\blacksquare$}
\begin{proof}
For the difference between normal random variables using robust range scaling, we have
    \begin{align*}
    X^{rn}_i - X^{rn}_l \sim N\left( 0,\sqrt{2}\frac{\sigma}{\operatorname {IQR} }\right).
\end{align*}
As the interquartile range $\operatorname{IQR}$ is linearly related to $\sigma$, we can set, without loss of generality, $\sigma=1$ and use the interquartile range of a standard normal distribution. That is, $\operatorname{IQR}= 2x^{0.75}$, where $P(X<x^{0.75})=0.75$. 

As before, taking absolute values leads to a folded normal distribution. Inserting the appropriate expressions into \ref{EV_fNormal} yields 
\begin{equation*}
    E[\lvert X^{rn}_{i} -X^{rn}_{l} \rvert ]=\frac{1}{x^{0.75}} \sqrt{\left(1/\pi\right)}.
\end{equation*}

\end{proof} {\hfill $\blacksquare$}

\section{Hennig-Liao limiting results}
Let $T$ denote the total potential pairwise differences among all observations. In the function \code{distancefactor} from the \proglang{R} package \pkg{fpc}, this is calculated as $T = n(n + 1)/2$. Let $W$ denote the cumulative \say{within-category} difference. This accounts for the number of ways two observations from the same category can be chosen, and is defined as
        \[
        W = \sum_{a \in \{1 \dots q\}} { \frac{n_a\left(n_a+1\right)}{2} },
        \]
        where $n_a$ denotes the number of observations with category $a$.
Define $B$ as the difference between the total and withing category differences. That is,
\[
B=T-W.
\]
The normalization or scaling factor $\eta$ is defined as
\begin{equation}\label{Eta}      \eta = \sqrt{\phi    \frac{T}{B} },
        \end{equation}
where $\phi$ is user-supplied with as suggested values $1/2$ for categorical and $(q-1)/q$ for ordinal variables. 
The normalization factor $\eta$ depends on the sample distribution, the user-supplied $\phi$, the number of categories $q$ and the number of observations $n$. If all categories are equally likely, we can derive limiting results with respect to the sample size. In particular:
\begin{theorem}\label{EtaLimits}
\[ \lim_{n\to\infty} \eta(n,q,\phi) = \sqrt{\phi\frac{q}{(q-1)}}, \]
\[ \lim_{n\to q} \eta(n,q,\phi) = \sqrt{ \phi\frac{(q+1)}{(q-1)}}. \]

\end{theorem}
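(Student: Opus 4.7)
The plan is to exploit the assumption that categories are equally likely so that the within-category term $W$ reduces to a single closed-form expression, and then both claims follow from the same algebraic simplification of $T/B$ evaluated either asymptotically in $n$ or at the boundary $n = q$.

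First I would set $n_a = n/q$ for each $a \in \{1,\dots,q\}$ (which is exact when $q \mid n$ and holds asymptotically by the law of large numbers under uniform probabilities). Substituting gives
\begin{equation*}
W \;=\; q \cdot \frac{(n/q)\,(n/q + 1)}{2} \;=\; \frac{n(n+q)}{2q}.
\end{equation*}
Then I would compute
\begin{equation*}
B \;=\; T - W \;=\; \frac{n(n+1)}{2} - \frac{n(n+q)}{2q} \;=\; \frac{n\bigl[q(n+1) - (n+q)\bigr]}{2q} \;=\; \frac{n^2(q-1)}{2q},
\end{equation*}
so that the ratio simplifies cleanly to
\begin{equation*}
\frac{T}{B} \;=\; \frac{n(n+1)/2}{n^2(q-1)/(2q)} \;=\; \frac{q\,(n+1)}{n\,(q-1)}.
\end{equation*}

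From here both statements are immediate. For the first, letting $n \to \infty$ gives $(n+1)/n \to 1$, so $T/B \to q/(q-1)$ and by continuity of the square root, $\eta \to \sqrt{\phi\,q/(q-1)}$. For the second, evaluating at $n = q$ (which corresponds to the smallest sample in which every category is represented exactly once, i.e.\ $n_a = 1$) the same expression yields $T/B = (q+1)/(q-1)$, hence $\eta = \sqrt{\phi(q+1)/(q-1)}$.

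The main conceptual obstacle is interpretive rather than technical: the ``limit as $n \to q$'' is not a genuine continuous limit but an evaluation at the minimal admissible sample size under the equal-probabilities assumption, and I would make this interpretation explicit (noting that $n_a = n/q$ is only exact when $q$ divides $n$, so the asymptotic claim should be stated along such subsequences or, equivalently, argued via convergence in probability of $n_a/n$ to $1/q$ together with the continuous mapping theorem applied to $T/B$). Once that is clarified, the remainder of the argument is routine algebra, with no genuinely hard step.
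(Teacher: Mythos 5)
Your proof is correct and follows essentially the same route as the paper's: substitute $n_a = n/q$ under equal category probabilities, compute $W$ and $B$, and simplify $T/B$. The only difference is a small refinement --- you keep the $+1$ terms exact and obtain the closed form $T/B = q(n+1)/\bigl(n(q-1)\bigr)$, which yields both limits (at $n\to\infty$ and at $n=q$) from a single expression, whereas the paper approximates termwise for the first limit and verifies the second by separate direct substitution of $n=q$, $n_a=1$; your remark that $n\to q$ is really an evaluation at the minimal admissible sample size is also consistent with how the paper treats it.
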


\begin{proof}
As each category is equally likely, the number of observations per category converges to  $n_a = n/q$, for all $a \in \{1 \dots q\}$. Furthermore, for $n$ large, $n_a +1 \to n_a$ and $n+1 \to n$. Consequently, $T=n(n+1)/2 \to n^2/2$ and $W =  qn_a\left(n_a+1\right)/2 \to qn_a^2/2$. Hence, $B \to n_a^2q(q-1)/2$. Inserting the limiting results for $T$ and $B$ into \ref{Eta} gives the first limiting result of Theorem \ref{EtaLimits}. The second limit result can be verified directly by inserting $n=q$ and $n_a=1$ into \ref{Eta}.
\end{proof} {\hfill $\blacksquare$}

\noindent When we have equal probabilities and $n$ goes to infinity, the variance per variable equals 
\begin{equation*}
    V(Z^{*}) = qV(Z^{*}_a)=q \eta^2 V(Z_a) = q\eta^2 \frac{1}{q}\left(1-\frac{1}{q}\right) = \eta^2 \left(\frac{q-1}{q}\right)= \phi. 
\end{equation*}
\noindent Hence, equation (\ref{HLscaling}) is satisfied.

\section{Categorical mean distances: skewed distributions}\label{appendix:skewedistributions}
The expected mean distances for the categorical variables distances depend on the chosen distance as well as on the distributions $\bm{p}$. Table \ref{table:expectedvalues} gives results when all categories are equally likely. Here we consider distributions where $p_1 \in \{0.05,0.1,0.2,0.33, 0.5,0.66, 0.8,0.9,0.95\}$ and $p_j = (1-p_1)/(q-1)$, with $j=2,\dots,q$, for $q \in \{2,3,5,10\}$. For the association-based variants we again consider the perfect bivariate dependence case where the marginal probabilities coincide and the number of categories for the two variables is equal.

Figures \ref{fig:expvals1} through \ref{fig:exvals_assbased} summarize the results for different numbers of categories and different levels of skewness for the distances from Table \ref{table:expectedvalues}. Note that the scales for the different distances vary significantly. For example, the values for the OF and IOF variants are much larger than other values. Moreover, for the IOF distance, the number of observations is also of importance. Here we selected $n=160$ for all cases. Increasing this would lead to larger values. Decreasing the value can lead to negative results for some scenarios as $np$ becomes smaller than $1$. 

\begin{figure}[!ht]
\center
\includegraphics[scale=.6]{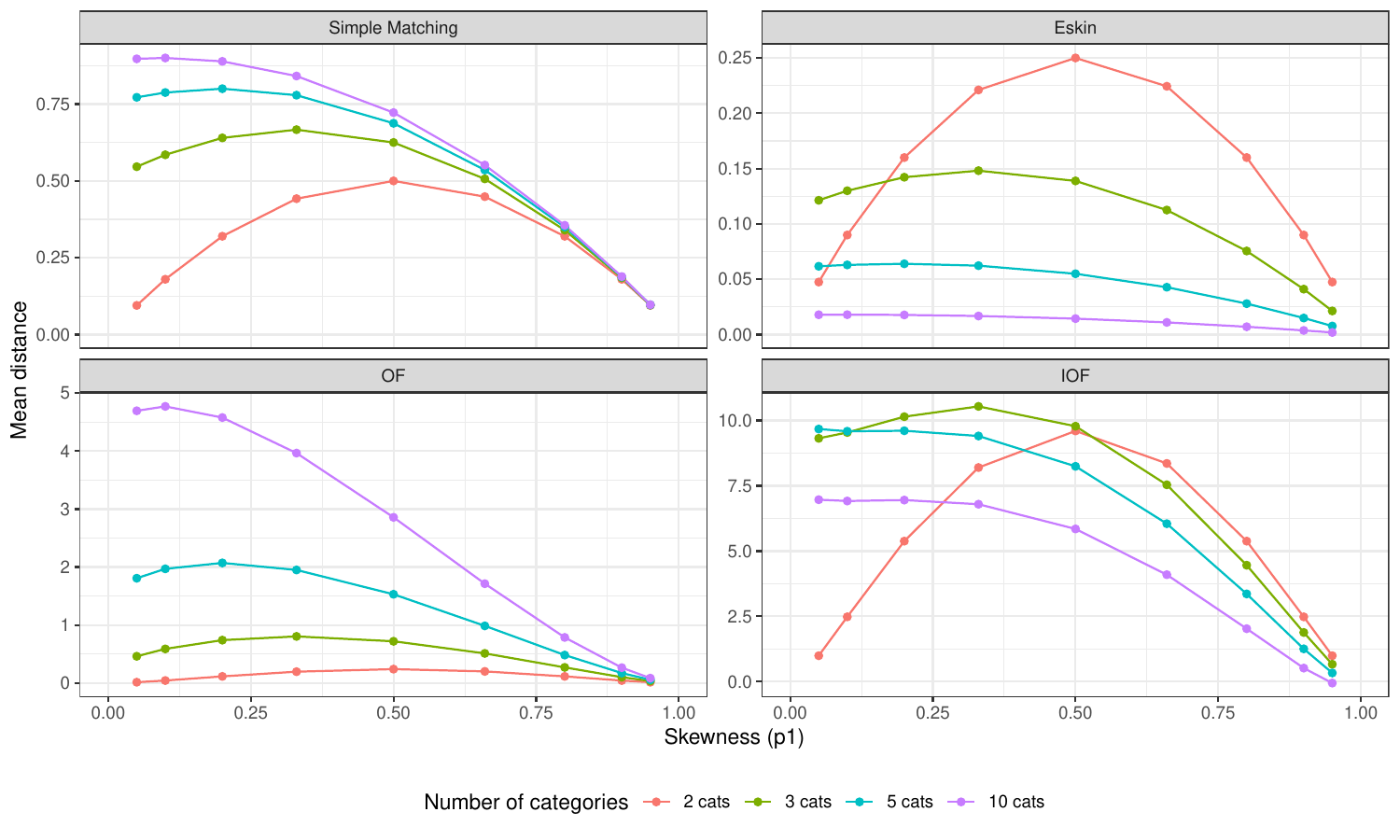}
\caption{Mean values for independent categorical variable distances}
\label{fig:expvals1}
\end{figure}

\begin{figure}[!ht]
\center
\includegraphics[scale=.6]{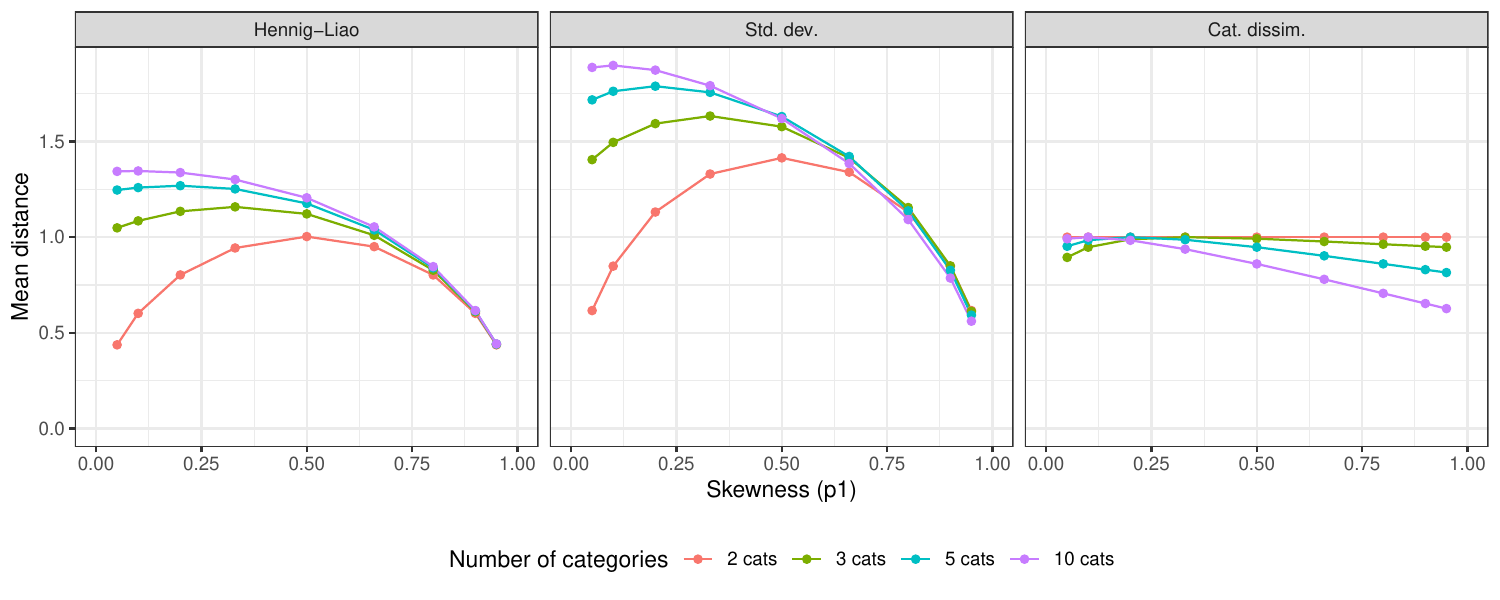}
\caption{Mean values for independent categorical variable distances}
\label{fig:expvals2}
\end{figure}

\begin{figure}[!ht]
\center
\includegraphics[scale=.65]{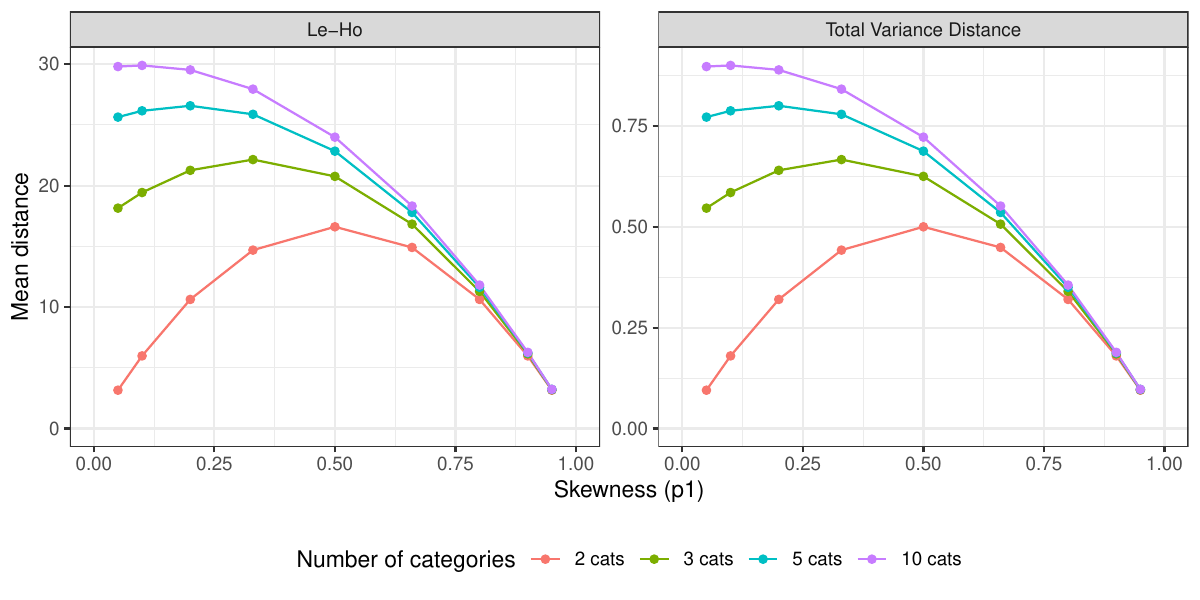}
\caption{Mean values for association-based distances based on dependent bi-variate data}\label{fig:exvals_assbased}
\end{figure} 

We see that for all variants the distribution as well as the number of categories influences the mean distances. 

Figure \ref{fig:exvals_assbased} illustrates the effect of the underlying marginal distributions for the dependent scenario. We see that the absolute size of the mean values differs considerably between methods. In particular, the values for the Kullback-Leibler variant are much larger. The pattern, however, is quite similar and for both variants the maximum mean distance occurs when all categories are equally likely, indicating maximum variance.  

\section{FIFA variable distributions }\label{appendix:FIFA}
\begin{figure}[!ht]
\center
\includegraphics[scale=.55]{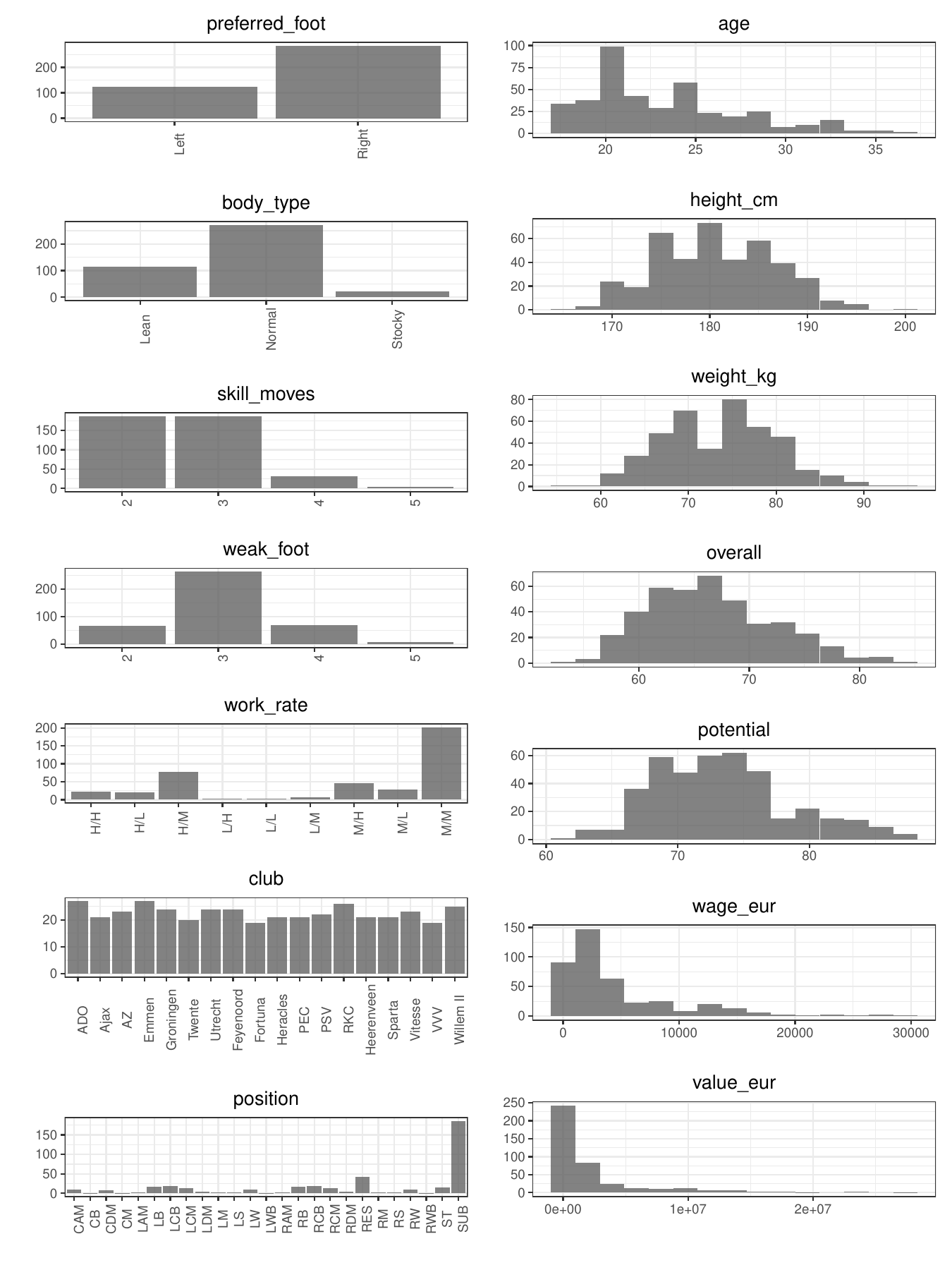}
\caption{Bar plots for the FIFA categorical variables (left), histograms for numerical variables (right)}\label{fig:fifa_desc}
\end{figure} 
\end{document}